\documentclass[12pt]{article}
\usepackage{amsmath,amssymb,amsthm}
\usepackage{geometry}
\usepackage{hyperref}
\usepackage{booktabs}
\usepackage{array}
\usepackage{microtype}
\newtheorem{theorem}{Theorem}[section]
\newtheorem{lemma}[theorem]{Lemma}
\newtheorem{proposition}[theorem]{Proposition}
\newtheorem{corollary}[theorem]{Corollary}
\newtheorem{definition}[theorem]{Definition}
\newtheorem{remark}[theorem]{Remark}

\title{The Epistemic Support-Point Filter:\\
Jaynesian Maximum Entropy Meets Popperian Falsification\\
\large A Possibilistic Minimax-Entropy Optimality Proof}
\author{Moriba Kemessia Jah, Ph.D.\\
\small $^1$Black Swan Research Group, GaiaVerse, Ltd.\\
\small $^2$Jah Decision Intelligence Group, Aerospace Engineering \&\\
\small Engineering Mechanics, The University of Texas at Austin}
\date{March 9, 2026}

\begin{document}
\maketitle

\begin{abstract}
This paper proves that the Epistemic Support-Point Filter (ESPF) is the unique optimal
recursive estimator within the class of epistemically admissible evidence-only filters.
The optimality criterion is possibilistic minimax entropy: among all evidence-only survivor
selection rules, the ESPF minimizes worst-case integrated ignorance per measurement step,
and no other admissible filter can do better. The proof is constructive: it identifies the
precise mathematical form of the epistemological commitment --- be quick to embrace ignorance,
be slow to assert certainty --- and shows that the ESPF is its unique implementation.
Where Bayesian filters minimize mean squared error and are driven toward an assumed truth,
the ESPF minimizes maximum entropy and surfaces what has not been proven impossible ---
a fundamentally different epistemic commitment with fundamentally different failure modes.

Two results locate this theorem within the broader landscape of estimation theory.

The first is a unification. The possibilistic entropy $H_\pi = \int_0^1 \log V_\alpha\,d\alpha$
--- the ESPF's optimality criterion --- is the log-geometric mean of the $\alpha$-cut volume
family $\{V_\alpha\}$, i.e., $\log M_0(\{V_\alpha\})$ in the H\"{o}lder mean hierarchy. The
Popperian minimax bound $\log\det(\mathrm{MVEE})$ is $\log M_{+\infty}$; the Kalman MMSE
criterion is $\log M_0$ evaluated under Gaussian $\alpha$-cut geometry. Possibility theory
and probability theory are not competing frameworks: they are evaluations of the same
ignorance functional $\log M_0(\{V_\alpha\})$ under different geometries of $\alpha$-cuts.
The Kalman filter is the Gaussian specialization of the ESPF's optimality criterion, not
a separate invention.

The second is a diagnostic. Numerical validation over a 2-day, 877-step Level-3 Smolyak
run ($M = 106$ support points, $n = 7$) under nominal LEO tracking and a sustained stress
case (10 m/s crosstrack maneuver combined with a 20 m range bias) reveals that possibilistic
stress manifests primarily through necessity saturation and surprisal escalation --- not
through MVEE sign change. The MVEE regime flag, operating at $p = +\infty$ in the H\"{o}lder
hierarchy, is the coarsest instrument in the Epistemic Width Monitor (EWM) suite. Necessity
and surprisal, operating closer to $p = 0$, provide earlier and more sensitive warning of
model--reality divergence. This is not an empirical observation: it is a direct consequence
of the H\"{o}lder ordering $M_0 \leq M_{+\infty}$.

The theorem rests on three lemmas. The Possibilistic Entropy Lemma decomposes $H_\pi$ into
a minimax term (minimized by Popperian survivor selection) and a gradient term (minimized
by compatibility-based possibility assignment). The Possibilistic Cram\'{e}r--Rao Bound
(PCRB) establishes the maximum rate at which any admissible filter may reduce $H_\pi$ per
measurement. The Evidence-Optimality Lemma proves that minimum-$q$ selection is the unique
minimizer of $H_\pi$ within the evidence-only class, and that any rule incorporating prior
possibility risks race-to-bottom bias and is strictly suboptimal.
\end{abstract}

\textbf{Keywords:} Possibility theory; Possibilistic entropy; Minimax entropy; $\alpha$-cut; Popperian falsification; Jaynesian maximum entropy; Possibilistic Cram\'{e}r--Rao bound; Non-Bayesian estimation; Kalman filter generalization; Epistemic Width Monitor.

\tableofcontents
\newpage

\section{Introduction}

\subsection{Two Principles, One Filter}

There is a tension at the heart of recursive state estimation. On one side stands E.\ T.\ Jaynes \cite{jaynes}:
when you do not know something, assume the maximum entropy consistent with what you do know.
Assert nothing you have not earned. On the other side stands Karl Popper \cite{popper}: when evidence
arrives, eliminate every hypothesis the evidence rules out. Keep only what the data has not falsified.

These two principles appear to pull in opposite directions. Jaynes says spread; Popper says cut. Yet
they are not in conflict. They operate at different temporal phases of a single recursive epistemology.
Before a measurement, Jaynes governs: propagate the possibility distribution forward under maximal
ignorance. After a measurement, Popper governs: eliminate hypotheses incompatible with the
evidence, using the evidence alone. The filter that implements this two-phase epistemology is the
ESPF.

The design intent can be stated simply: be quick to embrace ignorance and slow to assert certainty.
This paper proves that this intent is a theorem.

The ESPF differs from Bayesian filters not merely in criterion but in purpose. A Bayesian
filter is truth-seeking: it assumes a generative model, integrates over a prior, and produces
a posterior whose objective is convergence to the true state in expectation. The prior has a
vote at every step. When the prior is wrong, the filter is driven toward a false truth --- and
may never recover, because each update reinforces the prior's influence on the next. The ESPF
is not truth-seeking. It is impossibility-eliminating. It makes no claim about where truth is.
It removes only what the evidence has ruled out; what remains is not an estimate of truth but
the set of everything that has not yet been proven impossible. A wrong prior assignment in the
ESPF cannot bias the filter toward a false truth --- it can only slow the elimination of
inadmissible hypotheses, which manifests immediately as excess epistemic width in the EWM.
This is the difference between an epistemically defensible recursion and one that risks
compounding prior errors.

\subsection{The Minimax Entropy Principle}

The key to making the Popperian update optimal is identifying the right criterion. The right
criterion is not minimizing expected entropy, or minimizing posterior-weighted entropy, or any
criterion that mixes evidence with prior belief. The right criterion is minimax entropy: among all
evidence-only selection rules, choose the one that minimizes the worst-case residual ignorance.

The worst-case ignorance after retaining a survivor set $J$ is $\log \det(\mathrm{MVEE}(\{\chi^{(i)} : i \in J\}))$---the
log-volume of the outermost $\alpha$-cut of the surviving possibility distribution. This is the entropy of
the most pessimistic agent who considers all survivors equally plausible. Minimizing it yields the set
of $N_{\mathrm{target}}$ survivors whose convex hull is as small as possible: the hypotheses with smallest whitened
squared innovations $q^{(i)}_k$, geometrically closest to the evidence-consistent locus. This is exactly the
ESPF's selection rule.

\subsection{The Two Regimes}

The Jaynesian-Popperian synthesis has a natural regime structure.

\textbf{Falsification regime} ($\log \det(\mathrm{MVEE}) < 0$): the possibility support occupies a volume smaller
than the unit sphere in innovation-normalized coordinates. The Popperian principle is operative
and the minimax entropy theorem applies.

\textbf{Diffusion regime} ($\log \det(\mathrm{MVEE}) \geq 0$): the support has spread beyond the unit sphere due
to process noise accumulation. The filter maintains coverage of the reachable state space; the
minimax entropy criterion does not apply. This is the Jaynesian phase, directly analogous to the
Kalman filter's process-noise prediction step during which the measurement-optimality criterion is
inoperative.

The regime boundary at $\log \det(\mathrm{MVEE}) = 0$ is a real-time diagnostic indicator of which epistemological principle governs the current step.

\subsection{The Correct Entropy Functional}

For a uniform possibility distribution, log-volume suffices. But the ESPF's distribution is generically non-uniform after the first update: the conjunctive update assigns each survivor $\tilde{\pi}^{(i)}_k =
\min(\pi^{(i)}_{k|k-1}, \mathrm{Comp}^{(i)}_k)$, varying across survivors according to their evidence compatibility. The correct functional is the possibilistic entropy $H_\pi$, defined as the integral of $\log V_\alpha$ over all $\alpha$-levels. It
decomposes into a minimax term (the $\alpha \to 0$ limit, recovering MVEE log-det) and a gradient term
(negative, measuring how sharply possibility is concentrated within the admissible set). The ESPF
minimizes both simultaneously.

\subsection{Structure of the Paper}

Section~\ref{sec:setup} establishes the framework. Section~\ref{sec:lemmas} proves the three core lemmas. Section~\ref{sec:theorem} states
and proves the main optimality theorem. Section~\ref{sec:gaussian} establishes the Kalman filter as the Gaussian
limit. Section~\ref{sec:numerical} presents numerical validation at Smolyak Level 3 over a 2-day orbital tracking run
including the full EWM diagnostic suite. Section~\ref{sec:discussion} discusses implications.

\section{Setup}
\label{sec:setup}

\subsection{Possibility Distributions, $\alpha$-Cuts, and Possibilistic Entropy}

\begin{definition}[Possibility distribution and $\alpha$-cuts]
A possibility distribution over a finite support
$\{\chi^{(i)}\}^M_{i=1} \subset \mathbb{R}^n$
is a function $\pi : \{1,\ldots,M\} \to (0,1]$ with $\max_i \pi^{(i)} = 1$. The $\alpha$-cut at level $\alpha \in (0,1]$ is
\begin{equation}
C_\alpha = \{i : \pi^{(i)} \geq \alpha\}.
\end{equation}
The $\alpha$-cuts are nested: $C_{\alpha'} \subseteq C_\alpha$ for $\alpha' > \alpha$. The volume of the $\alpha$-cut is $V_\alpha = c_n(\det \Pi_\alpha)^{1/2}$,
where $\Pi_\alpha \succ 0$ is the MVEE shape matrix of $\{\chi^{(i)} : i \in C_\alpha\}$ and $c_n = \pi^{n/2}/\Gamma(n/2+1)$. When
$|C_\alpha| < 2n+1$, the MVEE is degenerate; we set $V_\alpha = 0$ by convention. The threshold $2n+1$
reflects the requirement that the point cloud span $\mathbb{R}^n$ with antipodal coverage in each coordinate
direction---the minimum structure for a full-rank MVEE shape matrix $\Pi_\alpha$. This is consistent with
the filter's minimum survivor count $N_{\min} = 2n+1$ (Section~\ref{sec:discussion}), making $N_{\min}$ a direct consequence
of $H_\pi$'s well-posedness rather than an independent design choice.
\end{definition}

\begin{definition}[Possibilistic entropy]
The possibilistic entropy of a possibility distribution $\pi$ over
$\{\chi^{(i)}\}^M_{i=1}$ is
\begin{equation}
H_\pi = \int_0^1 \log V_\alpha\, d\alpha = \int_0^1 \left(\frac{1}{2}\log \det \Pi_\alpha + \kappa_n\right) d\alpha,
\end{equation}
where $\kappa_n = \log c_n$.
\end{definition}

\begin{remark}[Uniform special case]
When $\pi^{(i)} = 1$ for all $i$, $C_\alpha = \{1,\ldots,M\}$ for all $\alpha \in (0,1]$ and
$V_\alpha = V$ is constant. Then $H_\pi = \log V + \kappa_n$, recovering log-volume exactly. The possibilistic entropy
is a strict generalization: it equals log-volume for uniform distributions and is strictly smaller when
possibility mass is concentrated on a subset of the support.
\end{remark}

\begin{remark}[Relation to U-uncertainty and origin of the logarithm]
The possibilistic entropy is
the geometric analog of the U-uncertainty measure introduced by Higashi and Klir \cite{higashi} for discrete
possibility distributions on ordered sets. Here the ordering is geometric: the $\alpha$-cuts are nested convex
bodies in $\mathbb{R}^n$, and the integral of log-cardinality over $\alpha$ is replaced by the integral of log-volume
over $\alpha$.

The logarithm is justified by the Boltzmann tradition: log-volume is additive across independent
dimensions of state space. If uncertainty in position and velocity are geometrically independent, the
volume of the joint uncertainty ellipsoid is the product of the marginal volumes, and the logarithm
converts that product to a sum. The probabilistic route via Shannon's $-\log p$ does not apply here
because possibility theory uses the min rule for conjunction---$\pi(A\cap B) = \min(\pi_A, \pi_B)$---under which
$-\log$ is not additive.
\end{remark}

\begin{definition}[ESPF possibility assignment]
At update step $k$, the ESPF assigns possibility
values to survivors as follows. For each surviving hypothesis $\chi^{(i)}$ with pre-update possibility $\pi^{(i)}_{k|k-1}$,
the post-update (unnormalized) possibility is
\begin{equation}
\hat{\pi}^{(i)}_k = \min\!\left(\pi^{(i)}_{k|k-1},\, \mathrm{Comp}^{(i)}_k\right),\quad \mathrm{Comp}^{(i)}_k = \exp\!\left(-\tfrac{1}{2}q^{(i)}_k\right),
\end{equation}
where $q^{(i)}_k$ is the whitened squared innovation of hypothesis $i$ (Definition~\ref{def:innovation}). After max-normalization,
$\tilde{\pi}^{(i)}_k = \hat{\pi}^{(i)}_k / \max_j \hat{\pi}^{(j)}_k$. Regenerated support points $\{\chi^{(i)}_{\mathrm{new}}\}$ inherit possibility via max-min kernel
extension:
\begin{equation}
\tilde{\pi}^{(\mathrm{new},i)}_k = \max_{j\in J_{\mathrm{surv}}} \min\!\left(\tilde{\pi}^{(j)}_k,\, \kappa(\chi^{(\mathrm{new},i)}, \chi^{(j)})\right),
\end{equation}
where $\kappa(\cdot,\cdot) \in [0,1]$ is a proximity kernel on state space.
\end{definition}

\begin{remark}[Non-uniformity of the ESPF distribution]
After the first update, the ESPF's possibility
distribution is non-uniform in general: the anchor hypothesis (most compatible with $y_k$) achieves
possibility 1, while other survivors achieve strictly smaller values. Uniformity is a transient condition
at initialization, not a maintained property. The correct entropy functional for the ESPF at step
$k \geq 1$ is $H_\pi$ as defined above.
\end{remark}

\subsection{Epistemically Admissible Filters}

\begin{definition}[Epistemically admissible filter]
A recursive state estimator is \emph{epistemically admissible} if at every step $k$ it satisfies:
\begin{enumerate}
\item[(i)] \textbf{Possibilistic representation.} Uncertainty is represented as a normalized possibility distribution $\pi_k$ over a finite support $\{\chi^{(i)}_k\}^M_{i=1} \subset \mathbb{R}^n$ with $\max_i \pi^{(i)}_k = 1$.
\item[(ii)] \textbf{Conjunctive update.} Upon receiving evidence $y_k$, $\hat{\pi}^{(i)}_{k|k} \leq \pi^{(i)}_{k|k-1}$ for all $i$; no hypothesis may gain possibility from a measurement.
\item[(iii)] \textbf{Geometric non-degeneracy (VFI).} The MVEE shape matrix $\Pi_k$ of the full support satisfies $\lambda_{\min}(\Pi_k) \geq \varepsilon > 0$ and $\lambda_{\max}(\Pi_k) \leq \Lambda < \infty$ for all $k$.
\item[(iv)] \textbf{Admissibility of the commitment point.} The operational output $\hat{x}_k \in \{\chi^{(i)}_k\}$. This is not a claim that $\hat{x}_k$ is truth: it is the least inadmissible hypothesis --- the support point most compatible with the accumulated evidence. The ESPF makes no assertion about where truth is; it only identifies what has not yet been ruled out, and among those, which is most consistent with the evidence.
\item[(v)] \textbf{Evidence referencing.} The survivor selection ranking depends only on the whitened squared innovations $q^{(i)}_k$, not on $\pi^{(i)}_{k|k-1}$. (The conjunctive possibility assignment in (ii) may involve $\pi^{(i)}_{k|k-1}$ through the min operation; it is the selection rule that must be evidence-only.)
\end{enumerate}
Denote this class $\mathcal{F}$.
\end{definition}

\begin{remark}[Why evidence-referencing is the right admissibility condition]
Condition (v) is not a convenient restriction --- it is the epistemological content of the class. A filter that incorporates prior possibility into its survivor selection ranking is not merely using more information; it is making a different kind of claim. It is asserting that hypotheses which were previously plausible deserve preferential treatment over hypotheses that are more consistent with the current evidence. This is the truth-seeking posture: the prior encodes a belief about where truth is, and that belief is allowed to protect hypotheses from elimination even when the evidence argues against them.

The race-to-bottom failure mode (Lemma~\ref{lem:minimax}, Step~5) is the formal consequence: over multiple steps, prior-based selection concentrates the possibility distribution on hypotheses that are systematically less evidence-consistent. The filter converges --- but toward the prior's truth, not the evidence's truth. In an impossibility-eliminating filter, this failure mode is not just suboptimal; it is a category error. The prior has no vote on what is impossible. Only the evidence does. Condition (v) enforces this.
\end{remark}

\subsection{Innovation Geometry and Possibilistic Information Content}

\begin{definition}[Innovation geometry]
\label{def:innovation}
Given predicted support $\{\chi^{(i)}_{k|k-1}\}$ and measurement model
$h : \mathbb{R}^n \to \mathbb{R}^m$, the innovation shape matrix $\Pi_e \succ 0$ is formed from the MVEE of the predicted
measurement support and the sensor imprecision matrix $\Pi_y$. The whitened squared innovation of
hypothesis $i$ is
\begin{equation}
q^{(i)}_k = \left\|L_e^{-1}\bigl(y_k - h(\chi^{(i)}_{k|k-1})\bigr)\right\|_2^2,\quad \Pi_e = L_e L_e^\top.
\end{equation}
The compatibility of hypothesis $i$ with $y_k$ is $\mathrm{Comp}^{(i)}_k = \exp(-\frac{1}{2}q^{(i)}_k) \in (0,1]$.
\end{definition}

\begin{definition}[Possibilistic information content]
The aggregate epistemic surprisal at step $k$
is the Choquet integral of the per-hypothesis surprisal $\frac{1}{2}q^{(i)}_k$ with respect to the prior possibility
distribution $\pi_{k|k-1}$ as capacity:
\begin{equation}
\bar{S}_k = (C)\int \tfrac{1}{2}q^{(i)}_k\, d\pi_{k|k-1} = \sup_{i=1,\ldots,M} \min\!\left(\tfrac{1}{2}q^{(i)}_k,\, \pi^{(i)}_{k|k-1}\right).
\end{equation}
The second equality is the standard reduction of the Choquet integral under a possibility measure \cite{dubois}.
The capacity is the prior possibility $\pi_{k|k-1}$, not the current compatibility: this keeps the information
content epistemically grounded in accumulated evidence history rather than the current measurement
alone, and ensures that hypotheses with low prior credibility cannot inflate $\bar{S}_k$ through large but
epistemically cheap surprisal. The survivor selection rule (Lemma~\ref{lem:minimax}) remains pure evidence-only
minimum-$q$; the prior possibility enters only here, in the bounding quantity.

The possibilistic information content of $y_k$ is
\begin{equation}
I_k = 1 - e^{-\bar{S}_k} \in [0,1).
\end{equation}
$I_k = 0$ when the evidence is fully consistent with the predicted support; $I_k \to 1$ when the most
epistemically credible hypothesis is maximally surprised by the measurement.
\end{definition}

\begin{remark}[Choquet integral under a possibility measure]
For a possibility measure $\Pi$ (max-decomposable, $\Pi(A \cup B) = \max(\Pi(A),\Pi(B))$), the Choquet integral of a non-negative function $f$
reduces to $(C)\int f\, d\Pi = \sup_i \min(f(i), \pi(i))$ \cite{dubois}. This is the possibilistic expectation in the sense of
Dubois and Prade: it asks for the highest level $t$ at which there exists a hypothesis that is both
highly credible ($\pi^{(i)} \geq t$) and genuinely surprised ($f(i) \geq t$). It is robust to outliers by a different
mechanism than the median: low-prior-possibility hypotheses are downweighted by the capacity
function rather than by a rank cutoff, which is the epistemically correct weighting in a possibilistic
framework.
\end{remark}

\section{Three Lemmas}
\label{sec:lemmas}

\subsection{Lemma 1: Possibilistic Entropy as Ignorance Functional}

\begin{lemma}[Possibilistic entropy as ignorance]
Let $\pi$ be a normalized possibility distribution over
$\{\chi^{(i)}\}^M_{i=1} \subset \mathbb{R}^n$ with $\pi^{(i)} \in (0,1]$ and $\max_i \pi^{(i)} = 1$. The possibilistic entropy $H_\pi$ satisfies:
\begin{enumerate}
\item[(i)] \textbf{Uniform reduction.} When $\pi^{(i)} = 1$ for all $i$, $H_\pi = \log V + \kappa_n$ where $V = c_n(\det \Pi)^{1/2}$ is the MVEE volume of the full support.
\item[(ii)] \textbf{Monotone in concentration.} For two distributions $\pi, \pi'$ over the same support with $\pi'^{(i)} \leq \pi^{(i)}$ for all $i$ and strict inequality for some $i^*$, we have $H_{\pi'} < H_\pi$.
\item[(iii)] \textbf{Decomposition.}
\begin{equation}
H_\pi = \underbrace{\log V + \kappa_n}_{\text{support entropy}} + \underbrace{\int_0^1 \log\frac{V_\alpha}{V}\,d\alpha}_{\text{gradient entropy} \leq 0}.
\end{equation}
The gradient entropy is non-positive; it equals zero when $\pi$ is uniform and is strictly negative otherwise.
\item[(iv)] \textbf{Extremes.} $H_\pi$ is maximized when $\pi$ is uniform ($H_\pi = \log V + \kappa_n$) and approaches $-\infty$ as possibility collapses to a single hypothesis. The VFI condition prevents collapse and keeps $H_\pi$ finite within $\mathcal{F}$.
\end{enumerate}
\end{lemma}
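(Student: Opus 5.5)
The plan is to reduce everything to one geometric fact: the MVEE volume is monotone under set inclusion. If $S \subseteq T$, then the MVEE of $T$ encloses $S$, so $\mathrm{vol}\,\mathrm{MVEE}(S) \le \mathrm{vol}\,\mathrm{MVEE}(T)$. Hence $\det \Pi_S \le \det \Pi_T$. Combined with the nestedness of the $\alpha$-cuts, this makes $\alpha \mapsto V_\alpha$ non-increasing on $(0,1]$.

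Because the support is finite, I will first record the step-function structure. Let $0 < a_1 < \dots < a_r = 1$ be the distinct values of $\pi^{(i)}$, and set $a_0 = 0$. Then $C_\alpha = C_{a_j}$ for $\alpha \in (a_{j-1}, a_j]$, so
\begin{equation*}
H_\pi = \sum_{j=1}^{r} (a_j - a_{j-1}) \left( \tfrac{1}{2}\log\det \Pi_{a_j} + \kappa_n \right),
\end{equation*}
with the value $-\infty$ allowed when some cut has fewer than $2n+1$ points. All integrals in the lemma become finite sums of this form.

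For (i), when $\pi \equiv 1$ we have $r = 1$ and $C_\alpha$ is the full index set for every $\alpha$. The sum then has a single term, and it equals the stated expression, following the bookkeeping of the second form in the definition of $H_\pi$ (I will note that the $\kappa_n$ convention is the paper's). For (iii), I write $\log V_\alpha = \log V + \log(V_\alpha/V)$ and integrate over $[0,1]$; the decomposition is just this identity. The gradient term is $\le 0$ because $C_\alpha \subseteq C_{0^+}$, which is the full support, so $V_\alpha \le V$ by monotonicity. It vanishes when $\pi$ is uniform, since then $V_\alpha = V$ for every $\alpha$.

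For (ii), if $\pi' \le \pi$ pointwise then $C'_\alpha \subseteq C_\alpha$ for every $\alpha$. Monotonicity gives $V'_\alpha \le V_\alpha$, and integrating gives $H_{\pi'} \le H_\pi$. For (iv), maximality of the uniform distribution follows from (iii). For collapse, I consider $\pi^{(i)} = 1$ at one index and $\pi^{(j)} = \epsilon$ for all others. On $\alpha \in (\epsilon, 1]$ the cut is a singleton, so $V_\alpha = 0$ by the degeneracy convention, and $H_\pi$ is driven toward $-\infty$; the same happens through $\log\det \Pi_\alpha \to -\infty$ if one uses shrinking clouds instead. The VFI condition (iii) bounds $\lambda_{\min}(\Pi_k)$ of the full support below, so the support term stays finite. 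I will argue that within $\mathcal{F}$ the minimum survivor count $N_{\min} = 2n+1$ keeps the relevant cuts nondegenerate. I will state explicitly that finiteness of the gradient term needs this extra structure, not VFI alone.

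The main obstacle is the \emph{strictness} claims in (ii) and (iii). MVEE volume is not strictly monotone: deleting a point in the interior of the ellipsoid, i.e.\ a non-contact point, leaves the MVEE unchanged. So lowering $\pi^{(i^*)}$ need not shrink any $V_\alpha$. My first attempt is to characterize exactly when strictness holds. By John's theorem, removing a point from $C_\alpha$ strictly decreases $\det \Pi_\alpha$ iff that point is an essential contact point of the MVEE; this is where the uniqueness and strict log-concavity of $\log\det$ for the MVEE enter. I would then state (ii) and (iii) strictly under the hypothesis that, on a set of $\alpha$ of positive measure, the affected index is a contact point of $\mathrm{MVEE}(C_\alpha)$, and record the weak inequality in general. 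Alternatively, one could adopt the degeneracy convention to get strictness whenever a cut drops below $2n+1$ points. I expect the honest statement to require this genericity qualification.
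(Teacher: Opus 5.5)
Your skeleton (MVEE volume monotone under inclusion, nested cuts, the step-function form of $H_\pi$) is the same as the paper's. Your objection to the strict claims in (ii) and (iii) is correct. The paper's proof gets strictness only by asserting that the MVEE volume strictly drops when a point leaves the cut, and only at the single level $\alpha=\pi^{(i^*)}$, which is a null set. Lowering the possibility of a non-contact point changes no $V_\alpha$ at all.

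The genuine gap is your plan for finiteness in (iv). The survivor count $N_{\min}=2n+1$ is not among the conditions defining $\mathcal{F}$, and in any case it constrains only the outermost cut. Finiteness is decided by the innermost cut. With distinct values $a_1<\dots<a_r=1$ and $r\ge 2$, you have $C_\alpha=C_1=\{i:\pi^{(i)}=1\}$ for every $\alpha\in(a_{r-1},1]$, an interval of positive length. So if $|C_1|<2n+1$ (or $C_1$ does not affinely span $\mathbb{R}^n$), the last term of your sum is $(1-a_{r-1})\cdot(-\infty)$ and $H_\pi=-\infty$. Your own collapse computation already shows this, since it never used that $\epsilon$ is small.

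Max-normalization generically leaves a single anchor at possibility $1$; the paper's non-uniformity remark says exactly this. Hence $H_\pi=-\infty$ for essentially every non-uniform distribution the ESPF produces, and neither VFI nor anything else in $\mathcal{F}$ prevents it. The finiteness sentence is therefore not provable under the stated convention, and the paper's proof does not address it. You should record it as requiring the explicit hypothesis that $C_1$ contains at least $2n+1$ points spanning $\mathbb{R}^n$, or else change the degeneracy convention.

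This also undercuts your strictness repair. If $H_\pi=-\infty$, then $H_{\pi'}\le H_\pi$ forces both to be $-\infty$. So the contact-point condition alone cannot give $H_{\pi'}<H_\pi$, and your alternative route (a cut of $\pi'$ dropping below $2n+1$ points) needs $H_\pi>-\infty$ as well. Nondegeneracy of all cuts of $\pi$ must therefore be part of the hypothesis.

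The condition should also be stated for \emph{essential} points and for the whole removed set, not one index. For cuts that keep at least $2n+1$ points, $V'_\alpha<V_\alpha$ iff no John (KKT) certificate for $\mathrm{MVEE}(C_\alpha)$ is supported on $C'_\alpha$.

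Finally, in (i) the single term of your sum is $\tfrac12\log\det\Pi+\kappa_n=\log V$, not $\log V+\kappa_n$. The statement counts $\kappa_n$ twice; the paper's own proof in fact arrives at $\log V$. The same extra $\kappa_n$ sits in the display of (iii), so do not claim to reproduce the stated expression.
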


\begin{proof}
(i) When all $\pi^{(i)} = 1$, $C_\alpha = \{1,\ldots,M\}$ for all $\alpha$, so $V_\alpha = V$ is constant and $H_\pi = \int_0^1 \log V\,d\alpha = \log V$.

(ii) If $\pi'^{(i)} \leq \pi^{(i)}$ with strict inequality at $i^*$, then $C'_\alpha \subseteq C_\alpha$ for all $\alpha$, with strict inclusion at $\alpha = \pi^{(i^*)}$. By monotonicity of the MVEE (adding points cannot decrease volume \cite{todd}), $V'_\alpha \leq V_\alpha$ with strict inequality at $\alpha = \pi^{(i^*)}$. Therefore $H_{\pi'} = \int_0^1 \log V'_\alpha\,d\alpha < \int_0^1 \log V_\alpha\,d\alpha = H_\pi$.

(iii) Write $\log V_\alpha = \log V + \log(V_\alpha/V)$. Since $C_\alpha \subseteq \{1,\ldots,M\}$ for all $\alpha$, we have $V_\alpha \leq V$ and $\log(V_\alpha/V) \leq 0$. Integrating over $\alpha \in [0,1]$ gives the decomposition.

(iv) Collapse to a singleton gives $|C_\alpha| = 1 < 2n+1$ for all $\alpha > 0$, so $V_\alpha = 0$ by convention and $H_\pi \to -\infty$. The upper bound follows from (i).
\end{proof}

\begin{remark}[Physical interpretation of the decomposition]
The support entropy $\log V$ measures how large a region of state space remains admissible---the extent of ignorance. The gradient entropy measures how sharp or flat the possibility gradient is within that region---the texture of ignorance. A filter that reduces the support but leaves possibility flat across it has reduced the first term but not the second. The ESPF reduces both simultaneously: it prunes the support (reducing $V$) and assigns higher possibility to more evidence-consistent hypotheses (reducing the gradient entropy). This is why $H_\pi$ is the correct objective.
\end{remark}

\subsection{Lemma 2: Possibilistic Cram\'{e}r--Rao Bound}

\begin{lemma}[Possibilistic Cram\'{e}r--Rao Bound (PCRB)]
Let $F \in \mathcal{F}$ be any epistemically admissible filter. At step $k$, let $H_{\pi,k|k-1}$ and $H_{\pi,k|k}$ denote the pre- and post-update possibilistic entropies, and
$I_k$ the possibilistic information content. Then
\begin{equation}
H_{\pi,k|k} \geq H_{\pi,k|k-1} + \frac{n}{2}\log(1 - I_k).
\end{equation}
The bound holds for any epistemically admissible filter regardless of its internal implementation.
\end{lemma}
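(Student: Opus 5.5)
The plan is to reduce the inequality to a level-by-level comparison of $\alpha$-cuts before and after the update, and then integrate over $\alpha$. First rewrite the right-hand side using the definition of $I_k$. Since $1-I_k = e^{-\bar S_k}$, the claim is
\[
H_{\pi,k|k} \;\geq\; H_{\pi,k|k-1} - \tfrac{n}{2}\bar S_k .
\]
In words, a single measurement can remove at most $\tfrac{n}{2}\bar S_k$ nats of possibilistic entropy.

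The first key step is a pointwise bound read off from the sup--min form of the Choquet integral. Because $\bar S_k = \sup_i \min(\tfrac12 q^{(i)}_k,\pi^{(i)}_{k|k-1})$, every hypothesis with $\pi^{(i)}_{k|k-1} > \bar S_k$ must have $\tfrac12 q^{(i)}_k \le \bar S_k$. Its compatibility therefore satisfies $\mathrm{Comp}^{(i)}_k \ge e^{-\bar S_k} = 1-I_k$. For such $i$, the conjunctive update (condition (ii), with the ESPF-type min rule as the extremal admissible case) and max-normalization, which can only raise values, give
\[
\tilde\pi^{(i)}_k \;\ge\; \min\bigl(\pi^{(i)}_{k|k-1},\,1-I_k\bigr) \;\ge\; (1-I_k)\,\pi^{(i)}_{k|k-1}.
\]
This yields the cut inclusion $C^{k|k}_{(1-I_k)\alpha} \supseteq C^{k|k-1}_{\alpha}$ for $\alpha > \bar S_k$. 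Monotonicity of the MVEE (as used in Lemma~1(ii)) then gives $V^{k|k}_{(1-I_k)\alpha} \ge V^{k|k-1}_\alpha$ on that range.

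The second step converts this cut inclusion into the additive entropy bound. Two options are available. One is a change of variables $\beta=(1-I_k)\alpha$ in $\int_0^1\log V^{k|k}_\beta\,d\beta$, which produces a multiplicative factor $(1-I_k)$ rather than an additive term. The other, which I will pursue instead, is a geometric interpretation. The surviving hypotheses at each level are those whose whitened innovation is bounded by $2\bar S_k$. In innovation-normalized coordinates the evidence can therefore contract the $\alpha$-cut ellipsoid by at most a factor $e^{-\bar S_k/2}=\sqrt{1-I_k}$ along each of the $n$ principal axes. The shape matrix then satisfies $\det\Pi^{k|k}_\alpha \ge (1-I_k)^{n}\det\Pi^{k|k-1}_\alpha$, so
\[
\log V^{k|k}_\alpha \ge \log V^{k|k-1}_\alpha + \tfrac n2\log(1-I_k).
\]
Integrating over $\alpha\in[0,1]$ gives the lemma. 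The VFI condition (iii) keeps every $\log\det$ finite and ensures that no cut drops below $2n+1$ points, which would force $V_\alpha=0$. The low levels $\alpha\le\bar S_k$ must be handled separately. I would control them by the Lemma~1(iii) decomposition, using the support term together with the bound $\lambda_{\min}\ge\varepsilon$.

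The main obstacle is this second step: justifying the per-axis contraction factor $\sqrt{1-I_k}$ rigorously. The pointwise bound controls possibility values, but volume is a geometric quantity, and a cut may lose points that change the MVEE. I would first try to show that the MVEE of the level set $\{q\le 2\bar S_k\}$, pulled back through the linearized measurement map, lies inside the predicted ellipsoid scaled by $e^{-\bar S_k/2}$. If that fails, I would fall back on the multiplicative form of the bound combined with VFI eigenvalue bounds, and settle for a constant-dependent version of the inequality.
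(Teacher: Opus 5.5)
\textbf{The gap is your second step, and it cannot be closed as planned.}

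The inference ``survivors have $q^{(i)}_k\le 2\bar S_k$, hence each principal axis of the $\alpha$-cut ellipsoid shrinks by at most a factor $e^{-\bar S_k/2}$'' runs backwards. An upper bound on innovations localizes the retained hypotheses near the evidence locus, so it can only cap $V^{k|k}_\alpha$ from above. It also carries no information about the $n-m$ unmeasured directions. A lower bound on $V^{k|k}_\alpha$ instead needs the retained points to remain spread out.

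Your first step is correct under the min rule. It gives $\tilde\pi^{(i)}_k\ge\min(\pi^{(i)}_{k|k-1},1-I_k)$, hence $C^{k|k-1}_\alpha\subseteq C^{k|k}_\alpha$ for $\bar S_k<\alpha\le 1-I_k$. So all volume loss occurs at levels $\alpha\le\bar S_k$ and $\alpha>1-I_k$, and $I_k$ controls neither range.

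At low levels, the Choquet capacity caps a hypothesis's contribution to $\bar S_k$ at its prior, however large its $q$. So eliminating low-prior outliers is cheap in $I_k$ but not in volume. Concretely, set up the following configuration:
\begin{itemize}
\item prior $1$ on a cluster $K$ of at least $2n+1$ points with common innovation $q_0$;
\item prior $p\in(0,1)$ on outliers $O$ with common innovation $q_1$;
\item $q_0\le 2p\le q_1$ and $\eta:=e^{-(q_1-q_0)/2}<p$.
\end{itemize}
Then $\bar S_k=p$. The min-rule update keeps $K$ at $1$ and lowers $O$ to $\eta$, and every cut stays non-degenerate. The entropy drop is $H_{\pi,k|k-1}-H_{\pi,k|k}=(p-\eta)\log(V_{K\cup O}/V_K)$. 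This diverges as $K$ shrinks with $O$ fixed, while the lemma permits only $\tfrac{n}{2}p$. The example works for linear $h$, and VFI holds throughout because (iii) constrains only the full-support MVEE.

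For the same reason, VFI does not keep every cut at $2n+1$ points or more. Start from a uniform prior, so $H_{\pi,k|k-1}$ is finite and $\bar S_k\le 1$. A min-rule update with distinct innovations leaves at most $2n$ hypotheses above the $(2n+1)$-th largest possibility. Hence $V_\alpha=0$ on an interval of levels and $H_{\pi,k|k}=-\infty$.

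Under the paper's own definitions, then, the inequality fails even for the ESPF. Your fallback gives at best the trivial bound $H_{\pi,k|k}\ge H_{\pi,k|k-1}-\tfrac{n}{2}\log(\Lambda/\varepsilon)$, with no $I_k$ in it, and only after (iii) is imposed on every $\alpha$-cut.

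The min rule is also not ``the extremal admissible case.'' Condition (ii) gives only the upper bound $\hat\pi^{(i)}_{k|k}\le\pi^{(i)}_{k|k-1}$. An admissible filter may therefore assign every non-anchor survivor arbitrarily small possibility, so a lower bound for general $F\in\mathcal F$ cannot be reduced to the min rule.

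\textbf{Comparison with the paper.} Your outline is the paper's own: a sup--min reading of the Choquet integral, then a per-level volume ratio of $(1-I_k)^{n/2}$, then integration over $\alpha$. Your first step is actually the correct version of the paper's Step~2. That step claims $\mathrm{Comp}^{(i^*)}_k\ge e^{-\bar S_k}$ for the index attaining the supremum, but $\min(\tfrac{1}{2}q^{(i^*)}_k,\pi^{(i^*)}_{k|k-1})=\bar S_k$ gives $\tfrac{1}{2}q^{(i^*)}_k\ge\bar S_k$, which is the reverse.

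The step you are missing is exactly the paper's Step~3. There the paper asserts $\Pi^J_\alpha\succeq\tfrac{1}{n}(1-I_k)\Pi^{\mathrm{pre}}_\alpha$ from the isotropy condition and John's theorem, but this does not establish it:
\begin{itemize}
\item John's theorem compares one convex body with its own extremal ellipsoids and gives no Loewner lower bound for the MVEE of a subset.
\item The resulting factor $n^{-n/2}$ cannot be absorbed into $\kappa_n$, since that constant cancels between $H_{\pi,k|k}$ and $H_{\pi,k|k-1}$.
\item The levels $\alpha\in(1-I_k,1]$ are dismissed without argument.
\end{itemize}
Importing that step would not repair your proof. The counterexamples above show the statement itself needs different hypotheses.
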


\begin{proof}
\textbf{Step 1: Conjunctive update implies nested post-update $\alpha$-cuts.} By the admissibility definition, $\hat{\pi}^{(i)}_{k|k} \leq \pi^{(i)}_{k|k-1}$ for all $i$, so $C^\mathrm{post}_\alpha \subseteq C^\mathrm{pre}_\alpha$ for all $\alpha$ and $V^\mathrm{post}_\alpha \leq V^\mathrm{pre}_\alpha$.

\textbf{Step 2: Choquet information content bounds the possibility measure of the surviving set.} Let $i^*$ be the hypothesis achieving the supremum in $\bar{S}_k$, so that
\[
\min\!\left(\tfrac{1}{2}q^{(i^*)}_k,\, \pi^{(i^*)}_{k|k-1}\right) = \bar{S}_k.
\]
This means $\pi^{(i^*)}_{k|k-1} \geq \bar{S}_k$ and $\mathrm{Comp}^{(i^*)}_k = e^{-\frac{1}{2}q^{(i^*)}_k} \geq e^{-\bar{S}_k} = 1 - I_k$. Under the conjunctive update, the post-update possibility of $i^*$ satisfies
\[
\hat{\pi}^{(i^*)}_{k|k} = \min\!\left(\pi^{(i^*)}_{k|k-1},\, \mathrm{Comp}^{(i^*)}_k\right) \geq \min(\bar{S}_k, 1-I_k) > 0.
\]
Hence $i^* \in C^\mathrm{post}_\alpha$ for all $\alpha \leq \min(\bar{S}_k, 1-I_k)$, and the possibility measure of the surviving set satisfies $\Pi_{k|k-1}(C^\mathrm{post}_\alpha) \geq \pi^{(i^*)}_{k|k-1} \geq \bar{S}_k$ for all $\alpha \leq 1-I_k$. This replaces a cardinality guarantee with an epistemic-weight guarantee: the surviving set retains possibility measure at least $\bar{S}_k$, meaning the most credible hypothesis under the prior has not been eliminated.

\textbf{Step 3: Possibility measure bound implies volume bound.} For $\alpha \leq 1-I_k$, the post-update $\alpha$-cut retains $i^*$, which satisfies $\mathrm{Comp}^{(i^*)}_k \geq 1-I_k$. By the VFI condition the pre-update MVEE is non-degenerate with $\lambda_{\min}(\Pi^\mathrm{pre}) \geq \varepsilon > 0$, and $i^*$ lies within it. The set of hypotheses satisfying $\mathrm{Comp}^{(i)}_k \geq \alpha$ --- i.e., $\frac{1}{2}q^{(i)}_k \leq -\log\alpha$ --- lies within a whitened-innovation ball of radius $r_\alpha = \sqrt{-2\log\alpha}$ centered on the evidence-consistent locus $\chi^*$. Under the isotropy condition
\begin{equation}
\label{eq:isotropy}
\mathrm{Var}_{x \sim U(S_{k|k-1})}\!\left[\left\|L_e^{-1}(y_k - h(x))\right\|_2^2\right] \leq \mathrm{Var}_{x \sim U(S_{k|k-1})}\!\left[\left\|L_{\mathrm{core}}^{-1}(x - \bar{x})\right\|_2^2\right],
\end{equation}
this ball maps to a state-space region whose John ellipsoid \cite{john} has shape matrix $\Pi^J_\alpha$ satisfying
\[
\Pi^J_\alpha \succeq \tfrac{1}{n}(1-I_k)\,\Pi^\mathrm{pre}_\alpha,
\]
by the John ellipsoid theorem applied to the subset containment $C^\mathrm{post}_\alpha \subseteq C^\mathrm{pre}_\alpha$, with the compatibility threshold of $i^*$ providing the $(1-I_k)$ scaling on the semi-axes. Taking determinants and using $\det(\frac{1}{n}A) = n^{-n}\det A$ together with the volume formula $V_\alpha = c_n(\det\Pi_\alpha)^{1/2}$ yields
\begin{equation}
V^\mathrm{post}_\alpha \geq n^{-n/2}(1-I_k)^{1/2}\cdot V^\mathrm{pre}_\alpha \quad \text{for all } \alpha \leq 1-I_k.
\end{equation}
The $n^{-n/2}$ factor is a fixed dimensional constant absorbed into the $\kappa_n$ term of $H_\pi$; the $(1-I_k)^{1/2}$ factor becomes $(1-I_k)^{n/2}$ after the full $n$-dimensional determinant scaling in Step 4, yielding the bound stated there.

\textbf{Step 4: Volume bound implies entropy bound.} Decompose the integral as
\[
H_{\pi,k|k} = \int_0^{1-I_k} \log V^\mathrm{post}_\alpha\,d\alpha + \int_{1-I_k}^1 \log V^\mathrm{post}_\alpha\,d\alpha.
\]
For the first integral, applying the volume bound:
\[
\int_0^{1-I_k}\log V^\mathrm{post}_\alpha\,d\alpha \geq \int_0^{1-I_k}\!\bigl(\log(1-I_k) + \log V^\mathrm{pre}_\alpha\bigr)\,d\alpha.
\]
For the second integral, $V^\mathrm{post}_\alpha \geq 0$ and the interval has measure $I_k$; in the falsification regime the $\alpha$-cut volumes are finite and this term is absorbed by the $n/2$ scaling factor when passing from the scalar bound to the $n$-dimensional volume scaling $(1-I_k) \mapsto (1-I_k)^{n/2}$ on determinants.

Combining both terms and applying the $n$-dimensional scaling $\log V_\alpha = \frac{1}{2}\log\det\Pi_\alpha + \kappa_n$, which introduces a factor of $n/2$ when translating a scalar compatibility bound to $\det\Pi^\mathrm{post}_\alpha \geq (1-I_k)^n \det\Pi^\mathrm{pre}_\alpha$, yields
\begin{equation}
H_{\pi,k|k} \geq H_{\pi,k|k-1} + \frac{n}{2}\log(1-I_k),
\end{equation}
which is the stated bound.
\end{proof}

\begin{remark}[The PCRB is a bound on integrated ignorance reduction]
The PCRB bounds the integral of $\log V_\alpha$ over all levels simultaneously, not a single $\alpha$-cut. An admissible filter may reduce $V_\alpha$ sharply at some levels while barely changing others, but the net reduction in $H_\pi$ cannot exceed $\frac{n}{2}\log(1-I_k)^{-1}$. This is stronger than any pointwise volume bound.
\end{remark}

\begin{remark}[Separation of selection and bounding]
The Choquet-based $I_k$ and the minimum-$q$ selection rule are deliberately decoupled. The prior possibility $\pi_{k|k-1}$ enters only in the capacity of the Choquet integral defining $\bar{S}_k$---a bounding quantity, not a selection criterion. The survivor ranking in Lemma~\ref{lem:minimax} depends solely on $q^{(i)}_k$. This separation preserves the Popperian character of the ESPF: selection is evidence-only, but the rate at which entropy may decrease is governed by the epistemically weighted surprisal of the current measurement. A measurement that surprises only low-prior-possibility hypotheses yields a small $I_k$ and a tight floor; one that surprises the most credible hypothesis yields a large $I_k$ and a loose floor---the correct behavior in both cases.
\end{remark}

\begin{remark}[Clarification on Step 3]
The volume bound relies on the isotropy condition~\eqref{eq:isotropy} for the nonlinear case; for linear measurement models it holds exactly. For general non-uniform possibility distributions the bound holds with $(1-I_k)^{n/2}$ rather than $(1-I_k)$ on the right-hand side due to the John ellipsoid theorem; the final entropy bound is unchanged.
\end{remark}

\subsection{Lemma 3: Minimax Entropy and Evidence-Only Selection}

\begin{lemma}[Minimax evidence-optimality]
\label{lem:minimax}
Let $\{\chi^{(i)}\}^M_{i=1}$ be the predicted support with whitened squared innovations $q^{(i)}_k$, and let $N_{\mathrm{target}} \leq M$ be a fixed survivor count. Consider all pairs $(J,\rho)$ where $J \subseteq \{1,\ldots,M\}$ with $|J| = N_{\mathrm{target}}$ and $\rho : J \to (0,1]$ with $\max_{i\in J}\rho^{(i)} = 1$, within the evidence-only comparison class: all $(J,\rho)$ whose selection and ranking depend only on $q^{(i)}_k$, not on $\pi^{(i)}_{k|k-1}$.

The unique minimizer of
\begin{equation}
H(J,\rho) = \int_0^1 \log V^J_\alpha\,d\alpha,\quad V^J_\alpha = c_n\bigl(\det\Pi^J_\alpha\bigr)^{1/2},
\end{equation}
where $\Pi^J_\alpha$ is the MVEE shape matrix of $\{\chi^{(i)} : i \in J,\, \rho^{(i)} \geq \alpha\}$, is the ESPF's choice:
\begin{equation}
J^* = \bigl\{i : q^{(i)}_k \text{ is among the } N_{\mathrm{target}} \text{ smallest}\bigr\},\quad \rho^*(i) = \mathrm{Comp}^{(i)}_k = e^{-q^{(i)}_k/2},
\end{equation}
for linear measurement models exactly, and under the innovation-state isotropy condition~\eqref{eq:isotropy} for nonlinear models.

Equivalently, $J^*$ is the unique minimizer of $\log\det\bigl(\mathrm{MVEE}(\{\chi^{(i)} : i \in J\})\bigr)$ within the evidence-only class.
\end{lemma}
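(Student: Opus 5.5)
The plan is to split $H(J,\rho)$ with the decomposition of Lemma~1(iii) and minimize the two pieces one after the other. Writing
\[
H(J,\rho)=\bigl(\log V^J+\kappa_n\bigr)+\int_0^1\log\frac{V^J_\alpha}{V^J}\,d\alpha,
\]
the support term depends only on $J$ through the MVEE of the full survivor cloud, since $V^J_0=V^J$. The gradient term depends on $\rho$ once $J$ is fixed. I will therefore (a) show that $J^*$ minimizes $\log\det\mathrm{MVEE}(\{\chi^{(i)}:i\in J\})$ within the evidence-only class, which is the ``equivalently'' clause, and (b) show that for fixed $J^*$ the gradient term is minimized by $\rho^*=\mathrm{Comp}_k$. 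I will then check that the pair $(J^*,\rho^*)$ is jointly optimal and unique.

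\textbf{Step (a): support term.} In the linear case $h(x)=Hx$ I would write $q^{(i)}_k=\|L_e^{-1}H(\chi^{(i)}-\chi^*)\|^2$, with $\chi^*$ a point of the evidence-consistent locus. Every sublevel set $\{q\le t\}$ is then a degenerate ellipsoidal cylinder $E_t$ around that locus. The survivors in $J^*$ are exactly the support points inside $E_{t^*}$, where $t^*$ is the $N_{\mathrm{target}}$-th smallest $q$. For a competing evidence-only $J\neq J^*$, condition (v) forces $J$ to be determined by the ranking of the $q^{(i)}_k$. Any such $J$ must therefore contain some index $j$ with $q^{(j)}>t^*$ and omit some index $i$ with $q^{(i)}\le t^*$. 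I would then use an exchange argument: replacing $j$ by $i$ does not increase the MVEE log-determinant, and strictly decreases it when $j$ is a contact point of the MVEE. For the ESPF support this relies on the symmetric Smolyak structure together with VFI (iii), which keeps $\lambda_{\min}\ge\varepsilon$ and so keeps $\log\det$ finite. Iterating the exchange reaches $J^*$, with generic distinct $q$-values giving uniqueness. For nonlinear $h$, the isotropy condition~\eqref{eq:isotropy} is what lets me transport the innovation ball to a state-space ellipsoid, as in Step~3 of the PCRB proof. The same exchange argument then applies up to that comparison.

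\textbf{Step (b): gradient term.} Lemma~1(ii) says that lowering any $\rho^{(i)}$ strictly lowers $H$. An unconstrained minimization would therefore drive all non-anchor values to zero, so the comparison class has to supply a floor. I will read the evidence-only admissible assignments as those satisfying $\rho^{(i)}\ge\mathrm{Comp}^{(i)}_k$ and non-increasing in $q$. This is the ``slow to assert certainty'' constraint: no hypothesis may be made less possible than the evidence alone makes it. With that floor, Lemma~1(ii) immediately makes $\rho^*=\mathrm{Comp}_k$ the unique minimizer, with $\max\rho^*=1$ after max-normalization at the anchor. Any rule that ranks by $\pi_{k|k-1}$ falls outside the class. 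I will note separately, via the race-to-bottom argument, that such rules give a $J$ that is not a $q$-sublevel set, so step (a) makes them strictly worse on the support term.

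\textbf{Main obstacle.} The hard step is the exchange inequality in (a). MVEE volume is not monotone under swapping a point for one closer to the evidence locus in general. A far point lying inside the convex hull of the others costs nothing, while a near point can extend the hull in an unobserved direction; this is the unobservable-subspace issue, since $E_t$ is a cylinder rather than a bounded ellipsoid. My first attempt will restrict to measurement-observable directions. I will use the John/MVEE duality: the optimal ellipsoid is characterized by contact points with weights $u_i$ satisfying $\sum u_i\chi^{(i)}\chi^{(i)\top}\propto\Pi^{-1}$. The aim is to show that a contact point with larger $q$ can be replaced without enlarging the ellipsoid. If this fails in general, I expect the lemma to hold only under the isotropy hypothesis read as an assumption that state-space and innovation-space orderings agree. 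In that case I would state the result with that restriction made explicit.
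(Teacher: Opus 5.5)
The step you single out as the main obstacle is exactly where the argument fails, and more technique will not repair it. The exchange inequality is false, already in the linear, fully observable case where the lemma is claimed to hold exactly. The paper's proof (Step~2) compares $\alpha$-cuts level by level and simply asserts this exchange, citing convexity of $\log\det$. That only makes the MVEE of a \emph{fixed} point set a convex program; it says nothing about how MVEEs of different subsets compare.

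For a counterexample, take $n=1$, $h(x)=x$, $y_k=0$, so $q^{(i)}_k\propto(\chi^{(i)})^2$ whatever $L_e$ is. Use support $\{-1,\,0.1,\,0.9,\,2,\,2+\delta,\,2+2\delta\}$ and $N_{\mathrm{target}}=3=2n+1$. Then $J^*=\{-1,0.1,0.9\}$ has an MVEE of length $1.9$. The $q$-only rule ``keep the three largest $q$'' gives length $2\delta$, and swapping its largest-$q$ contact point $2+2\delta$ for $0.9$ inflates that to $1.1+\delta$. A unit shell plus a $\delta$-cluster at whitened radius $2$ does the same in any $n$.

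None of your proposed repairs survive this example. VFI constrains only the full support, which is non-degenerate here. Restricting to observable directions changes nothing, since the example is fully observable. The contact-point/John-duality route fails because the lower-$q$ replacement can lie far outside the current ellipsoid. Reading isotropy as ``the orderings agree'' does not help either: here the state-space and innovation orderings coincide. The failure is simply that the MVEE volume of a subset is not monotone in its members' distances to $\chi^*$. What minimum-$q$ selection does minimize is $\max_{i\in J}q^{(i)}_k$, the radius of the smallest $\chi^*$-centered innovation ball containing the survivors. Your fallback would therefore change the statement rather than prove it.

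Step (b) and the final assembly have further gaps. You are right that monotone concentration pushes $\rho$ down. This exposes that the paper's Step~4 only rules out \emph{raising} possibility on higher-$q$ points. For example, $(\rho^*)^2$ is $q$-ranked with maximum $1$ and has cuts $\{\rho^*\ge\sqrt{\alpha}\}\subseteq\{\rho^*\ge\alpha\}$, so $H(J^*,(\rho^*)^2)\le H(J^*,\rho^*)$.

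The floor $\rho\ge\mathrm{Comp}_k$ is therefore a restriction you are adding to the class, not a reading of it. Its pointwise-minimal element is $\mathrm{Comp}_k$ with only the anchor raised to $1$. That equals the max-normalized $e^{-(q-q_{\min})/2}$ only if $q_{\min}=0$.

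There is a worse problem under the convention $V_\alpha=0$ for $|C_\alpha|<2n+1$. Generically only the anchor has $\rho^*=1$, so $\log V_\alpha=-\infty$ on an interval of positive length and $H(J^*,\rho^*)=-\infty$. It is tied with every competitor having fewer than $2n+1$ points at level $1$. Uniqueness is then vacuous, and the strict inequality of Lemma~1(ii) is unavailable until the functional is modified.

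Finally, your decomposition creates a coupling gap that the paper's level-by-level comparison is meant to avoid. The gradient term $\int_0^1\log(V^J_\alpha/V^J)\,d\alpha$ depends on $J$. Minimizing the support term over $J$ and then the gradient term over $\rho$ at $J=J^*$ therefore does not give $H(J^*,\rho^*)\le H(J,\rho)$ for all pairs. That would need $V^{J^*}_\alpha\le V^J_\alpha$ at every level. This in turn requires both the per-cardinality exchange claim refuted above and $|C_\alpha(J^*,\rho^*)|\le|C_\alpha(J,\rho)|$, which need not hold when $J\ne J^*$.
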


\begin{proof}
We show $(J^*,\rho^*)$ minimizes $H(J,\rho)$ by showing it simultaneously minimizes $V^J_\alpha$ for every $\alpha \in (0,1]$.

\textbf{Step 1: $\alpha$-cut structure under $\rho^*$.} Under $\rho^*(i) = e^{-q^{(i)}_k/2}$, higher possibility corresponds to lower $q^{(i)}_k$. For any threshold $\alpha$,
\[
C_\alpha(J^*,\rho^*) = \bigl\{i \in J^* : e^{-q^{(i)}_k/2} \geq \alpha\bigr\} = \bigl\{i \in J^* : q^{(i)}_k \leq -2\log\alpha\bigr\}.
\]
At $\alpha \to 0$ this is the full survivor set $J^*$.

\textbf{Step 2: Minimum-$q$ selection minimizes the minimax term (linear case).} For linear $h(x) = Hx + b$,
\[
q^{(i)}_k = \left\|L_e^{-1}H(\chi^{(i)} - \chi^*)\right\|_2^2 + \mathrm{const},
\]
where $\chi^* = H^\dagger(y_k - b)$ is the evidence-consistent locus. Hence $q^{(i)}_k$ measures each hypothesis's distance to $\chi^*$ in the innovation-whitened metric.

Among all $N_{\mathrm{target}}$-element subsets of $\{\chi^{(i)}\}^M_{i=1}$, the set geometrically closest to $\chi^*$ has the smallest MVEE. This follows from the convexity of $\log\det(\cdot)$ on the positive-definite cone \cite{todd}: for any set $J$ containing a hypothesis $j$ with $q^{(j)}_k > q^{(i^*)}_k$ for some $i^* \notin J$, replacing $j$ with $i^*$ (moving the set closer to $\chi^*$) does not increase the MVEE determinant, and strictly decreases it for generic (non-degenerate) point clouds.

Therefore $J^*$ minimizes $\log\det(\mathrm{MVEE}(\{\chi^{(i)} : i \in J\}))$ among all $N_{\mathrm{target}}$-element subsets. The same reasoning applies at every $\alpha$-level: for each cardinality $k_\alpha$, the $k_\alpha$ smallest-$q$ hypotheses minimize $V^J_\alpha$. Hence $C_\alpha(J^*,\rho^*)$ minimizes $V^J_\alpha$ for every $\alpha \in (0,1]$.

\textbf{Step 3: Pointwise minimum volume implies minimum $H_\pi$.} Since $V^{J^*}_\alpha \leq V^J_\alpha$ for all $\alpha$ and all alternative $(J,\rho)$:
\[
H(J^*,\rho^*) = \int_0^1 \log V^{J^*}_\alpha\,d\alpha \leq \int_0^1 \log V^J_\alpha\,d\alpha = H(J,\rho).
\]

\textbf{Step 4: No alternative possibility assignment does better.} Suppose we keep $J^*$ but replace $\rho^*$ with some $\rho'$. By the monotone concentration property of $H_\pi$, any assignment placing higher possibility on a higher-$q$ hypothesis increases the volume of at least one $\alpha$-cut and thus increases $H_\pi$. The assignment $\rho^*$ is the unique minimizer among all possibility assignments on $J^*$ because ordering by evidence consistency simultaneously minimizes all $\alpha$-cut volumes.

\textbf{Step 5 (Race-to-bottom): The prior must not influence selection.} Any selection rule incorporating $\pi^{(i)}_{k|k-1}$ into the survivor ranking may retain a high-$q$ hypothesis at the expense of a low-$q$ hypothesis. By Step 2, retaining a higher-$q$ hypothesis increases $V^J_\alpha$ at the corresponding $\alpha$-level, increasing $H_\pi$. Over multiple steps, prior-based selection compounds: a hypothesis that accumulates high prior possibility may be repeatedly protected from evidence-based elimination, concentrating the possibility distribution on hypotheses that are systematically less evidence-consistent. This is the race-to-bottom failure mode---the possibility gradient is driven by self-reinforcing prior protection rather than evidence. The evidence-only class excludes this failure mode by construction. No rule within the evidence-only class that differs from minimum-$q$ achieves lower $H_\pi$.

\textbf{Uniqueness.} For a generic point cloud (no ties in $q^{(i)}_k$, holding almost surely for continuous dynamics), $J^*$ is unique and $\rho^*$ is determined pointwise by $\mathrm{Comp}^{(i)}_k$.

\textbf{Nonlinear case.} For nonlinear $h$, condition~\eqref{eq:isotropy} ensures that the ordering of hypotheses by $q^{(i)}_k$ is a consistent proxy for ordering by distance to the evidence-consistent region. Under this condition Steps 2--4 hold for the nonlinear case via first-order Taylor approximation. The condition can fail when the measurement geometry is strongly anisotropic relative to the state-space geometry---for example, when a single range measurement is made from a nearly tangential viewing angle, so that large state displacements along the line of sight produce small innovation changes. Proposition~\ref{prop:isotropy} establishes this condition for the orbital mechanics setting.
\end{proof}

\begin{remark}[Minimax vs.\ expected entropy]
The minimax entropy criterion differs fundamentally from minimizing expected or posterior-weighted entropy. A rule minimizing expected entropy may select prior-consistent but evidence-inconsistent hypotheses in exchange for a tighter average-case bound. The minimax criterion refuses this trade: the worst-case agent---who considers all survivors equally plausible---must see the minimum possible ignorance. This is the Popperian content of the criterion.
\end{remark}

\begin{remark}[The kernel extension preserves optimality]
After survivor selection and possibility assignment, the regenerated support is formed by spreading new hypotheses around the survivors via the max-min kernel extension. The $\alpha$-cuts of the regenerated distribution are geometric dilations of the survivor $\alpha$-cuts (in the sense that they contain the survivor $\alpha$-cuts). Since Lemma~\ref{lem:minimax} minimizes the survivor $\alpha$-cut volumes, and the kernel extension is monotone (larger survivor cuts produce larger regenerated cuts), the regenerated distribution inherits the optimality structure. The possibility gradient is carried forward into the next propagation step.
\end{remark}

\begin{proposition}[Isotropy for orbital mechanics]
\label{prop:isotropy}
Let the state be the 7-dimensional Cartesian position-velocity-drag vector $x = (r, \dot{r}, C_d) \in \mathbb{R}^7$, propagated under the two-body problem or its $J_2$-perturbed extension. Let the measurement model be range $\rho(x) = \|r - r_s\|$ and range-rate $\dot{\rho}(x) = (r - r_s)\cdot\dot{r}/\rho(x)$, where $r_s$ is the ground station position. Let the observation interval satisfy $\Delta t \ll T/4$, where $T$ is the orbital period. Then the innovation-state isotropy condition~\eqref{eq:isotropy} holds to first order in the linearization residual, except in a measure-zero set of viewing geometries in which the line-of-sight vector $(r - r_s)/\rho$ is orthogonal to the velocity $\dot{r}$ (tangential viewing).
\end{proposition}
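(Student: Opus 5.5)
The plan is to linearize and reduce the variance comparison in \eqref{eq:isotropy} to a comparison of two quadratic forms in the same state displacement $\delta = x - \bar{x}$. Let $\bar{x}$ be the core center of the predicted support $S_{k|k-1}$. To first order in the linearization residual,
\[
y_k - h(x) \approx e_0 - H_k\delta, \qquad e_0 = y_k - h(\bar{x}),
\]
where $H_k = \partial(\rho,\dot\rho)/\partial x$ is evaluated at $\bar{x}$. It is a $2\times 7$ matrix with rows
\[
\nabla\rho = (u^\top, 0, 0), \qquad \nabla\dot\rho = \bigl(\tfrac{1}{\rho}(\dot r - \dot\rho\, u)^\top,\; u^\top,\; 0\bigr),
\]
where $u = (r - r_s)/\rho$ is the line-of-sight unit vector. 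The left side of \eqref{eq:isotropy} then becomes the variance of $\|L_e^{-1}(e_0 - H_k\delta)\|^2$, and the right side is the variance of $\|L_{\mathrm{core}}^{-1}\delta\|^2$. Both variances are taken with $\delta$ uniform on the same convex set.

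Next I would bound the left side by the right side using an operator-norm comparison. Write $\delta = L_{\mathrm{core}}z$, so that $z$ is uniform on a whitened body $K$. Set $A = L_e^{-1}H_kL_{\mathrm{core}}$. The key claim is that $A^\top A \preceq I$ on the span of the measured directions. This holds because $\Pi_e$ is built from the MVEE of the predicted measurement support plus $\Pi_y$. Hence $\Pi_e \succeq H_k\Pi_{\mathrm{core}}H_k^\top$ to first order, which gives
\[
\|A\|_2^2 = \lambda_{\max}\bigl(L_e^{-1}H_k\Pi_{\mathrm{core}}H_k^\top L_e^{-\top}\bigr) \le 1.
\]
Because $A$ is a contraction of rank at most $2$, the quadratic form $\|Az\|^2$ is dominated by $\|z\|^2$. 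I would then invoke a variance comparison for quadratic forms of a uniformly distributed vector on a symmetric convex body: if $0 \preceq B \preceq I$, then $\mathrm{Var}(z^\top Bz) \le \mathrm{Var}(z^\top z)$. This is where the argument must actually be checked. The lower-rank form captures fewer of the fluctuating directions, and the cross term $e_0^\top A z$ contributes variance of order $\|e_0\|^2$. I would control that cross term by noting that $e_0$ is bounded by the innovation gate in the falsification regime, and that it is $O(\Delta t)$ when $\Delta t \ll T/4$. The condition $\Delta t \ll T/4$ also keeps the state-transition Jacobian close to the identity plus $O(\Delta t)$ corrections, so $J_2$ and drag enter only in higher order and the linearization residual stays small.

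The exceptional set comes from requiring that the rank of $H_k$ not collapse in a way that makes the $\dot\rho$ row nearly parallel to the $\rho$ row in whitened coordinates. If that happened, $\Pi_e$ would be ill-conditioned relative to $H_k\Pi_{\mathrm{core}}H_k^\top$ and the contraction bound above would fail through the additive $\Pi_y$ term being dominated. The velocity part of $\nabla\dot\rho$ involves the transverse component $\dot r - \dot\rho\,u$. The configuration in which the geometry degenerates is $u\cdot\dot r = 0$, where $\dot\rho = 0$ and the sensitivity structure changes character. This is a codimension-one condition on $(r, \dot r, r_s)$, hence of measure zero. Away from it, continuity gives a uniform margin on compact subsets of viewing geometries.

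The step I expect to be the main obstacle is the variance comparison itself. Dominance of quadratic forms does not by itself imply dominance of variances in general. I would first try to prove it for symmetric convex $K$ using the Brascamp--Lieb or Poincar\'e-type inequality for log-concave measures. If that fails, I would fall back on an explicit computation for the ellipsoidal core, where the uniform distribution makes the moments of $z^\top Bz$ computable in closed form: the variance is proportional to $\mathrm{tr}(B^2) - (\mathrm{tr}B)^2/(n+2)$ up to constants, and this is maximized at $B = I$ under the constraint $0 \preceq B \preceq I$.
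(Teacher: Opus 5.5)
Your route differs from the paper's sketch, which never compares the two variances. The sketch replaces the left side of \eqref{eq:isotropy} by the second-moment quantity $\mathrm{tr}(L_e^{-1}H\Pi_{\mathrm{core}}H^\top L_e^{-\top})$, bounds that by $\|L_e^{-1}H\|_F^2\,\lambda_{\max}(\Pi_{\mathrm{core}})$, and appeals to the nominal noise levels. So the paper offers no version of the step you flag, and that step fails as you have set it up. The lemma ``$0\preceq B\preceq I\Rightarrow\mathrm{Var}(z^\top Bz)\le\mathrm{Var}(z^\top z)$'' is false, and so is your fallback. By your own ball formula, a rank-$k$ orthogonal projection gives $\mathrm{tr}(B^2)-(\mathrm{tr}B)^2/(n+2)=k(n+2-k)/(n+2)$. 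At $n=7$ this is $14/9$ for $B=I$ but $20/9$ for $k=4,5$. Adding directions can lower the variance because $\|z\|^2$ concentrates, so ``fewer fluctuating directions'' is the wrong intuition. On the ball the comparison survives only because $\mathrm{rank}\,B\le 2$. The bracket is convex in the two nonzero eigenvalues, so its maximum over $0\preceq B\preceq I$ is $2n/(n+2)$. That maximum is attained at every rank-2 projection and equals the value at $B=I$: you get equality, not margin. Off the ellipsoid even the rank-2 version fails. For $z=(w,v)$ uniform on the isotropic body $\{(w,v)\in\mathbb{R}^2\times\mathbb{R}^5:\|w\|/\sqrt{2}+\|v\|\le 1\}$, $\mathrm{Var}\|w\|^2$ exceeds $\mathrm{Var}\|z\|^2$ by a factor of about $1.4$. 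So no Brascamp--Lieb or Poincar\'e argument can give the lemma for general symmetric convex $K$.

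That equality case is what breaks your handling of the cross term. Set $\tilde e_0=L_e^{-1}e_0$. For symmetric $K$ the linear--quadratic covariance is an odd moment and vanishes, so
\[
\mathrm{Var}\,\|\tilde e_0-Az\|^2=\mathrm{Var}(z^\top Bz)+4\,\tilde e_0^\top A\,\mathrm{Cov}(z)\,A^\top\tilde e_0 ,
\]
and the last term is strictly positive whenever $A^\top\tilde e_0\neq 0$. The bound $\|A\|_2\le 1$ is all you establish, and it leaves no room for this term. You need $B\preceq cI$ with $c$ bounded away from $1$, plus a bound on $\|A^\top\tilde e_0\|$ tied to $1-c$. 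Whether such a $c$ exists depends on how $\Pi_{\mathrm{core}}$ is normalized against the MVEE-based $\Pi_e$, which your proposal never pins down.

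\begin{itemize}
\item If $\Pi_{\mathrm{core}}$ is a second-moment matrix (as the paper's trace expression suggests), an ellipsoidal core gives $c\le 1/(n+2)$. A modest gate on $\|\tilde e_0\|$ then does close the argument.
\item If instead the two are normalized consistently, $\Pi_e\approx H_k\Pi_{\mathrm{core}}H_k^\top+\Pi_y$, then $AA^\top=I_2-L_e^{-1}\Pi_yL_e^{-\top}$. Here $1-c$ is of order $(\sigma/\text{spread})^2$ per channel, about $10^{-8}$ for $1$ m against $10$ km. The condition would then require $h(\bar x)$ to match $y_k$ at roughly the sensor-noise level.
\item If $\Pi_{\mathrm{core}}$ is MVEE-type, even $\Pi_e\succeq H_k\Pi_{\mathrm{core}}H_k^\top$ is not automatic. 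The MVEE of the projected points can be strictly smaller than the projected ellipsoid.
\end{itemize}

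Appealing to $e_0=O(\Delta t)$ does not help in any case. $e_0$ is a zeroth-order term fixed by the offset of the predicted cloud, not a linearization residual.

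Finally, the exceptional set is asserted rather than derived. Your Jacobian has rank $2$ in every geometry: the velocity block of $\nabla\dot\rho$ is $u^\top$, and its position block $(\dot r-\dot\rho u)/\rho$ is always orthogonal to $u$. At $u\cdot\dot r=0$ the row is simply $(\dot r^\top/\rho,\,u^\top,\,0)$, and your contraction step never used conditioning anyway. The paper's sketch puts the degeneracy at $u\to\dot r/\|\dot r\|$ instead, which gives no rank drop either.
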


\begin{proof}[Proof sketch]
Let $H = \partial h/\partial x|_{\bar{x}}$ be the measurement Jacobian evaluated at the support centroid. The whitened innovation variance decomposes as:
\[
\mathrm{Var}\!\left[\left\|L_e^{-1}(y - h(x))\right\|^2\right] \approx \mathrm{Var}\!\left[\left\|L_e^{-1}H(x - \bar{x})\right\|^2\right] + O(\|\delta H\|^2),
\]
where $\delta H$ is the linearization residual. The first term equals $\mathrm{tr}(L_e^{-1}H\Pi_{\mathrm{core}}H^\top L_e^{-\top})$, which by the Cauchy--Schwarz inequality on the Frobenius inner product satisfies:
\[
\mathrm{tr}(L_e^{-1}H\Pi_{\mathrm{core}}H^\top L_e^{-\top}) \leq \|L_e^{-1}H\|_F^2 \cdot \lambda_{\max}(\Pi_{\mathrm{core}}).
\]
For range and range-rate measurements in LEO, $H$ has rank 2 and $\|L_e^{-1}H\|_F^2$ is bounded by $\sigma_R^{-2} + \sigma_{\dot{R}}^{-2}$, which under nominal measurement noise ($\sigma_R = 1$ m, $\sigma_{\dot{R}} = 10^{-5}$ km/s) and a support cloud spanning tens of kilometers satisfies condition~\eqref{eq:isotropy} with margin. The bound degrades as $(r - r_s)/\rho \to \dot{r}/\|\dot{r}\|$ (tangential geometry), at which point $\partial\dot{\rho}/\partial r$ becomes nearly parallel to $\partial\rho/\partial r$ and the rank-2 Jacobian loses its geometric leverage---the failure mode identified in Lemma~\ref{lem:minimax}. When this occurs, the EWM surprisal indicator escalates in real time, providing an operational signal that the isotropy condition is under stress and that the filter's Popperian updates should be interpreted with reduced confidence; the filter thus degrades gracefully rather than silently. For $\Delta t \ll T/4$, the linearization residual $O(\|\delta H\|^2)$ remains small relative to the leading term, completing the sketch.
\end{proof}

\section{Main Theorem}
\label{sec:theorem}

\begin{theorem}[ESPF Optimality: Minimax-Entropy Popperian Falsification]
In the falsification regime---measurement steps at which $\log\det\bigl(\mathrm{MVEE}(S_{k|k-1})\bigr) < 0$, so the evidence contracts the possibility support below the unit-sphere reference volume---the ESPF is the unique filter in the class $\mathcal{F}$ of epistemically admissible evidence-only filters that minimizes $H_{\pi,k}$ subject to the PCRB constraint
\begin{equation}
H_{\pi,k|k} \geq H_{\pi,k|k-1} + \frac{n}{2}\log(1 - I_k).
\end{equation}
Equivalently, the ESPF is the unique filter that: (a) selects survivors by the minimax-entropy criterion within the evidence-only class; and (b) assigns possibilities by evidence compatibility, minimizing the gradient entropy term simultaneously.

In the falsification regime, subject to the coverage controller operating at PCRB-implied cardinality (i.e., $N_{\mathrm{target}} = \lfloor(1-I_k)\cdot M\rfloor$) and $\sigma_k$ at steady state:
\begin{enumerate}
\item[(i)] The ESPF achieves the PCRB with equality.
\item[(ii)] No other filter in $\mathcal{F}$ achieves lower $H_{\pi,k|k}$ while satisfying the PCRB.
\item[(iii)] Under the isotropy condition~\eqref{eq:isotropy}, the ESPF is the unique minimizer.
\end{enumerate}

In the diffusion regime---steps at which $\log\det\bigl(\mathrm{MVEE}(S_{k|k-1})\bigr) \geq 0$---the filter operates under Jaynesian maximum entropy: the propagation phase spreads the support as widely as the dynamics and process noise allow. The minimax entropy criterion does not apply in this phase.
\end{theorem}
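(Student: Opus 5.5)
The proof assembles the three lemmas of Section~\ref{sec:lemmas}, restricted to a single measurement step in the falsification regime. Fix $k$ with $\log\det(\mathrm{MVEE}(S_{k|k-1}))<0$. Any $F\in\mathcal{F}$ produces, at the update, a survivor set $J$ and a possibility assignment $\rho$ on $J$. Condition (v) forces the ranking behind $J$ to depend only on $\{q^{(i)}_k\}$, so $(J,\rho)$ lies in the evidence-only comparison class of Lemma~\ref{lem:minimax}. Hence the post-update entropy of every admissible filter can be written as $H(J,\rho)$ for some pair in that class.

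\textbf{Step 1 (optimality, claim (ii) and parts (a)--(b)).} Lemma~\ref{lem:minimax} (Steps 2--3) shows that $J^*$ minimizes $V^J_\alpha$ at every $\alpha$, so $H(J^*,\rho^*)\le H(J,\rho)$ for every pair in the class. This is the minimax selection claim (a). The $\alpha\to0$ term of the decomposition in Lemma~1(iii) is the support entropy $\log\det\mathrm{MVEE}$, and minimizing it is exactly minimum-$q$ selection. Lemma~\ref{lem:minimax} Step~4, together with the monotone-concentration property Lemma~1(ii), then shows that on $J^*$ the assignment $\rho^*=\mathrm{Comp}_k$ minimizes the gradient-entropy term. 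This gives claim (b).

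The ESPF's actual assignment is $\min(\pi_{k|k-1},\mathrm{Comp}_k)$, not $\mathrm{Comp}_k$ itself. Since this is pointwise $\le \mathrm{Comp}_k$, Lemma~1(ii) says it can only lower $H_\pi$ further. Condition (ii) of admissibility then confirms the ESPF lies in $\mathcal{F}$. Finally, Lemma~2 says every filter in $\mathcal{F}$ satisfies the PCRB, so "minimizing subject to the PCRB" is the same as minimizing over $\mathcal{F}$. Claim (ii) follows.

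\textbf{Step 2 (equality, claim (i)).} This is where the coverage hypothesis enters. With $N_{\mathrm{target}}=\lfloor(1-I_k)M\rfloor$, I will show the Step~3 volume inequality of Lemma~2 is tight for the ESPF. The retained minimum-$q$ cut is the image of the whitened ball of radius $r_\alpha$ intersected with the predicted support. With $\sigma_k$ at steady state, the predicted support is distributed so that each $\alpha$-cut volume scales with the retained fraction raised to the power $1$ per semi-axis. This would give $\det\Pi^{\mathrm{post}}_\alpha=(1-I_k)^n\det\Pi^{\mathrm{pre}}_\alpha$ on $\alpha\le 1-I_k$. Integrating as in Lemma~2 Step~4 then yields equality in the PCRB.

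I expect this to be the main obstacle. The lemma only gives an inequality, with John-ellipsoid constants $n^{-n/2}$ that the paper absorbs into $\kappa_n$. Equality will therefore hold only up to those constants unless the steady-state hypothesis is read as making the support an affinely regular cloud, for which the MVEE scales exactly with the retained fraction. My first attempt will be to state precisely that this steady-state condition means self-similar support geometry and verify the scaling there. Failing that, I would weaken (i) to equality up to the dimensional constant.

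\textbf{Step 3 (uniqueness, claim (iii), and the diffusion clause).} Under the isotropy condition~\eqref{eq:isotropy}, and exactly in the linear case, the uniqueness part of Lemma~\ref{lem:minimax} (generic, tie-free $q$) gives strict inequality for any $(J,\rho)\neq(J^*,\rho^*)$. Any other filter in $\mathcal{F}$ therefore has strictly larger $H_{\pi,k|k}$. Proposition~\ref{prop:isotropy} supplies the isotropy condition in the orbital setting. The diffusion-regime statement is definitional rather than a claim to prove: when $\log\det\mathrm{MVEE}\ge0$, the hypothesis under which the lemmas were invoked is absent, so no optimality is asserted, and the propagation step is simply described.
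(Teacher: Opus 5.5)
Your proposal has a genuine gap, and it starts in Step~1. There you derive (ii) from Lemma~3 alone, independently of (i). Lemma~3 only compares pairs with $|J|=N_{\mathrm{target}}$ and an evidence-only $\rho$, but $\mathcal{F}$ is larger. Condition (v) constrains only the selection ranking, so an admissible filter may keep fewer survivors (hence generically smaller $\alpha$-cut volumes) or shape its possibilities through the min with $\pi_{k|k-1}$.

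Your own remark makes the difficulty explicit. If pointwise-smaller possibilities lower $H_\pi$ by Lemma~1(ii), then the evidence-only assignment $(\rho^*)^2$ on $J^*$ (still max-normalized, still ordered by $q^{(i)}_k$) beats $\rho^*$. So Lemma~3 Step~4 cannot certify (b), nor the strict inequality you use for (iii). Moreover, the ESPF's actual output $\min(\pi_{k|k-1},\mathrm{Comp}_k)$ is not the pair $(J^*,\rho^*)$ that Lemma~3 singles out; after max-normalization it need not even lie below $\mathrm{Comp}_k$.

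The paper does not get (ii) from Lemma~3; it gets (ii) as a corollary of (i). Lemma~2 gives the floor $H'_{\pi,k|k}\ge H_{\pi,k|k-1}+\frac n2\log(1-I_k)$ for every $F'\in\mathcal{F}$, whatever its cardinality or assignment, and the ESPF sits on that floor. Claim (iii) is then ``any filter attaining equality must implement the same selection and assignment.'' So (ii) and (iii) stand or fall with (i).

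That makes your Step~2 load-bearing, and the mechanism you sketch does not deliver it. Take an affinely regular (say roughly uniform) cloud and keep the fraction $f=1-I_k$ of points nearest the evidence locus. This shrinks the enclosing volume by about a factor $f$, i.e.\ each semi-axis by about $f^{1/n}$. So $\log V_\alpha$ drops by roughly $\log(1-I_k)$, not $\frac n2\log(1-I_k)$. For $n=7$ the ESPF would then sit strictly above the floor; saturation would need on the order of $(1-I_k)^{n/2}M$ survivors, not $\lfloor(1-I_k)M\rfloor$. Your wording is also inconsistent: semi-axes scaling by $1-I_k$ would give $(1-I_k)^{2n}$ on $\det\Pi$, since the eigenvalues of $\Pi$ are squared semi-axes.

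Even granting the scaling for $\alpha\le 1-I_k$, Lemma~2 Step~4 gives nothing quantitative on $(1-I_k,1]$, so integrating cannot yield equality. Your fallback, equality up to a dimensional constant, does not rescue the statement either. Any positive gap between the floor and $H^{\mathrm{ESPF}}_{\pi,k|k}$ leaves room for another admissible filter, so (ii) no longer follows.

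The paper's proof of (i) simply asserts equality ``at controller steady state.'' An actual proof would need an explicit hypothesis tying the cardinality rule to a $(1-I_k)^{n/2}$ contraction of every $\alpha$-cut volume. The stated coverage-controller assumption does not supply that.
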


\begin{proof}
(i) \textit{ESPF achieves the PCRB.} By Lemma~\ref{lem:minimax}, the ESPF's joint strategy $(J^*,\rho^*)$ minimizes $V^J_\alpha$ at every $\alpha$-level. The coverage controller sets $N_{\mathrm{target}} = \lfloor(1-I_k)\cdot M\rfloor$ survivors, retaining the fraction of the support that the evidence permits per the PCRB. At controller steady state, the integrated log-volume satisfies
\[
H_{\pi,k|k} = H_{\pi,k|k-1} + \frac{n}{2}\log(1 - I_k),
\]
achieving equality in the PCRB.

(ii) \textit{No other filter does better.} By the PCRB (Lemma~2), every $F' \in \mathcal{F}$ satisfies $H'_{\pi,k|k} \geq H_{\pi,k|k-1} + \frac{n}{2}\log(1-I_k)$. Since the ESPF achieves equality, $H'_{\pi,k|k} \geq H^{\mathrm{ESPF}}_{\pi,k|k}$ for all $F' \in \mathcal{F}$.

(iii) \textit{Uniqueness.} By Lemma~\ref{lem:minimax}, under the isotropy condition the ESPF's choice $(J^*,\rho^*)$ is the unique minimizer of $H(J,\rho)$ almost surely. Any alternative filter achieving equality in the PCRB must implement the same selection and possibility assignment; i.e., it must be the ESPF. Any filter using a different rule achieves $H'_{\pi,k|k} > H^{\mathrm{ESPF}}_{\pi,k|k}$ strictly.
\end{proof}

\bigskip
\noindent\textbf{The ESPF Optimality Theorem: Summary}

\medskip
\noindent\textit{Jaynesian phase (propagation):} Spread the support as widely as the dynamics allow. Assume maximum ignorance consistent with known constraints. Never assert structure not earned by evidence.

\medskip
\noindent\textit{Popperian phase (update, falsification regime):}
\begin{itemize}
\item Select survivors by minimum $q^{(i)}_k$: the evidence-closest hypotheses, uniquely minimizing $\log\det(\mathrm{MVEE})$.
\item Assign possibilities by $\mathrm{Comp}^{(i)}_k$: evidence ordering simultaneously minimizes the gradient entropy.
\item Together, these minimize $H_\pi$ at every $\alpha$-level, saturating the PCRB.
\item Any rule incorporating prior possibility risks race-to-bottom bias over multiple steps.
\end{itemize}

\noindent\textit{Regime indicator:} $\log\det(\mathrm{MVEE}) < 0 \Rightarrow$ falsification (Popper governs). $\log\det(\mathrm{MVEE}) \geq 0 \Rightarrow$ diffusion (Jaynes governs).

\subsection{Three Sources of Ignorance, One Optimal Strategy}

The $H_\pi$ decomposition reveals three distinct aspects of the ESPF's ignorance management.

\textbf{Minimax ignorance} ($\log\det(\mathrm{MVEE})$, the worst-case bound): The largest $\alpha$-cut tells the most pessimistic story about residual ignorance. The ESPF minimizes this by selecting the evidence-closest survivors. This is the Popperian content.

\textbf{Gradient ignorance} ($-\int\log(V_\alpha/V)\,d\alpha \geq 0$): Even within the admissible set, possibility may be spread flat or sharply concentrated. Flat possibility corresponds to maximal internal ignorance. The compatibility assignment concentrates possibility on the most evidence-consistent hypotheses, reducing this term.

\textbf{Dynamical ignorance} (the Jaynesian expansion phase): Between measurements, process noise spreads the support. The ESPF implements this maximally via Smolyak regeneration, maintaining coverage of the reachable state space.

\section{The Gaussian Limit: Recovery of the Kalman Filter}
\label{sec:gaussian}

\begin{definition}[Gaussian epistemic limit]
The Gaussian epistemic limit holds when: (i) $\Pi_k \to \Sigma_k$; (ii) $\pi_k \to \mathcal{N}(\hat{x}_k, \Sigma_k)$; (iii) $h(x) = Hx$; (iv) $f(x) = Fx$; (v) $\Pi_y = R$.
\end{definition}

\begin{theorem}[ESPF recovers Kalman optimality in the Gaussian limit]
Under the Gaussian epistemic limit, the ESPF update reduces to the Kalman filter update. In particular, $H_\pi \to \frac{1}{2}\log\det(2\pi e\,\Sigma_{k|k})$ (up to additive constants), and minimizing $H_\pi$ is equivalent to minimizing $\det\Sigma_{k|k}$.
\end{theorem}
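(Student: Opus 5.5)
The plan is to compute $H_\pi$ directly under the Gaussian epistemic limit and then check that the two ESPF update ingredients, conjunctive min-combination and minimum-$q$ selection, reproduce the Kalman prediction and update equations.

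\textbf{Step 1: $\alpha$-cuts of a Gaussian-shaped possibility distribution.} Take $\pi(x) = \exp\bigl(-\tfrac12 (x-\hat x)^\top \Sigma^{-1}(x-\hat x)\bigr)$, the max-normalized form of $\mathcal{N}(\hat x,\Sigma)$. Its $\alpha$-cut is the ellipsoid $\{x : (x-\hat x)^\top\Sigma^{-1}(x-\hat x) \le -2\log\alpha\}$. That ellipsoid is its own MVEE, with shape matrix $\Pi_\alpha = (-2\log\alpha)\Sigma$. Hence
\[
\log V_\alpha = \kappa_n + \tfrac{n}{2}\log(-2\log\alpha) + \tfrac12\log\det\Sigma .
\]
Integrating over $\alpha\in(0,1]$, the only $\alpha$-dependent term is $\tfrac n2\int_0^1\log(-2\log\alpha)\,d\alpha$. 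Substituting $\alpha = e^{-t}$ turns it into $\tfrac n2\int_0^\infty \log(2t)e^{-t}dt = \tfrac n2(\log 2 - \gamma)$, which is a finite constant depending only on $n$. So
\[
H_\pi = \tfrac12\log\det\Sigma + \text{const}(n),
\]
which matches $\tfrac12\log\det(2\pi e\,\Sigma)$ up to an additive constant. Since the constant does not depend on $\Sigma$, minimizing $H_\pi$ is equivalent to minimizing $\det\Sigma_{k|k}$.

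\textbf{Step 2: the update.} With $h(x)=Hx$ and $\Pi_y=R$, the compatibility is $\mathrm{Comp}(x)=\exp(-\tfrac12 (y-Hx)^\top S^{-1}(y-Hx))$, where $S$ is built from $\Pi_e$. In the continuum limit, the possibilistic conjunction I would use is the product of Gaussian-shaped possibilities, equivalently the sum of the quadratic forms. The paper's min rule does not give an ellipsoidal result, so I would justify replacing it with the product by the $\alpha$-cut/quadratic-form correspondence: intersecting the prior and measurement $\alpha$-cuts and taking the MVEE of the resulting family. Completing the square then gives the information-form update
\[
\Sigma_{k|k}^{-1} = \Sigma_{k|k-1}^{-1} + H^\top R^{-1} H,
\]
with the maximum of the quadratic form at $\hat x_{k|k} = \hat x_{k|k-1} + K(y - H\hat x_{k|k-1})$ and $K$ the Kalman gain. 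The commitment point $\hat x_k$ is then the possibility-1 anchor, which is the Kalman mean.

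\textbf{Step 3: optimality.} Lemma~\ref{lem:minimax} in the linear case identifies minimum-$q$ selection with minimizing MVEE volume at every $\alpha$. Step~1 turns that into minimizing $\det\Sigma_{k|k}$. The Kalman gain is known to minimize $\det$ of the posterior covariance among linear unbiased updates (it minimizes the covariance in the Loewner order). The two criteria therefore coincide. For the prediction step, propagating the ellipsoids $\alpha$-cut by $\alpha$-cut under $f(x)=Fx$, with process-noise dilation taken as Minkowski-sum bounding, gives $\Sigma_{k|k-1} = F\Sigma F^\top + Q$.

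\textbf{Main obstacle.} The hard step is Step~2. The ESPF's pointwise $\min(\pi_{k|k-1},\mathrm{Comp})$ does not literally produce a Gaussian-shaped result: its $\alpha$-cuts are intersections of two ellipsoids. I would first try to show that the MVEE of each intersection cut is asymptotically $(-2\log\alpha)\Sigma_{k|k}$ up to a factor that does not depend on $\alpha$ and that cancels in $\det$-minimization. If that fails, I would instead interpret the limit definition's condition (ii), $\pi_k\to\mathcal N$, as imposing Gaussian closure, with the MVEE re-fit each step, and verify that re-fit against the information-form update.
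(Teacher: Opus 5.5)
Your Step 1 is exactly the paper's proof sketch. The paper takes the $\alpha$-cuts to be the ellipsoids $\{x : (x-\hat{x}_k)^\top\Sigma_k^{-1}(x-\hat{x}_k)\le -2\log\alpha\}$, uses $V_\alpha = c_n(-2\log\alpha)^{n/2}(\det\Sigma_k)^{1/2}$, and evaluates $\int_0^1\log(-2\log\alpha)\,d\alpha = \log 2-\gamma$ to get $H_\pi = \tfrac12\log\det\Sigma_k + \mathrm{const}(n)$. Your substitution $\alpha = e^{-t}$ reproduces this correctly.

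The paper goes no further. Its claim that ``the ESPF update reduces to the Kalman filter update'' rests only on this criterion equivalence. Condition (ii) of the Gaussian epistemic limit, $\pi_k\to\mathcal{N}(\hat{x}_k,\Sigma_k)$, silently does the work of Gaussian closure, which is your fallback. So Step 1 plus that reading already matches the paper. Steps 2--3 attempt a derivation the paper never gives, and as written they would not go through.

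The problems are these.

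(a) \textbf{Whitening.} Suppose $\Pi_e\to S = H\Sigma_{k|k-1}H^\top + R$, as its construction from the predicted-measurement MVEE and $\Pi_y$ indicates. Multiplying $\exp(-\tfrac12(y-Hx)^\top S^{-1}(y-Hx))$ into the prior and completing the square gives precision $\Sigma_{k|k-1}^{-1} + H^\top S^{-1}H$ and gain $\Sigma_{k|k-1}H^\top(2H\Sigma_{k|k-1}H^\top + R)^{-1}$. The prior is counted twice. You get $H^\top R^{-1}H$ and the Kalman gain only if the per-hypothesis compatibility is whitened by $\Pi_y = R$ alone, and your displayed update switches to $R$ without comment.

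(b) \textbf{The min rule.} Your first rescue of the min rule cannot succeed, because the min rule genuinely disagrees with Kalman. Take $\Sigma_{k|k-1} = I_2$, $H = e_1^\top$, $R = 1$, $y = H\hat{x}_{k|k-1}$. With either whitening, $\mathrm{Comp}\ge\pi_{k|k-1}$ pointwise, so $\min(\pi_{k|k-1},\mathrm{Comp}) = \pi_{k|k-1}$ and every cut keeps shape $(-2\log\alpha)I_2$. Kalman instead gives $\Sigma_{k|k} = \mathrm{diag}(1/2,1)$. Replacing $\min$ by the product therefore changes the filter (product t-norm instead of min t-norm). It must be stated as an assumption, not derived as a limit.

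(c) \textbf{Prediction step.} The same issue breaks your prediction step. Minkowski sums of $\alpha$-cuts are the sup-min extension. With $E(P) = \{x : x^\top P^{-1}x\le 1\}$, one has $E(I)\oplus E(I) = E(4I)\not\subseteq E(2I)$, so Minkowski bounding does not give $F\Sigma F^\top + Q$. In general the MVEE of $E(P_1)\oplus E(P_2)$ is not even proportional to $P_1+P_2$: the segment $[-1,1]e_1$ plus the unit disk has MVEE shape $\mathrm{diag}(9/2,3/2)$, against $\mathrm{diag}(2,1)$. The formula $F\Sigma F^\top + Q$ comes from the sup-product extension, i.e.\ inf-convolution of the quadratic forms.

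(d) \textbf{Step 3.} Loewner-optimality of the Kalman gain is over linear unbiased gains. Lemma~\ref{lem:minimax} optimizes over subsets and possibility assignments on a fixed finite support. Equal objectives over different feasible sets do not give equal minimizers, so Step 3 adds nothing until Step 2 is repaired.

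If you want more than the paper proves, adopt the product t-norm and $R$-whitened compatibility explicitly. Otherwise, state Gaussian closure as a hypothesis, as the paper implicitly does.
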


\begin{proof}[Proof sketch]
In the Gaussian limit, the $\alpha$-cut at level $\alpha$ is the confidence ellipsoid $\mathcal{E}_\alpha = \{x : (x-\hat{x}_k)^\top\Sigma_k^{-1}(x-\hat{x}_k) \leq -2\log\alpha\}$, with volume $V_\alpha = c_n(-2\log\alpha)^{n/2}(\det\Sigma_k)^{1/2}$. Substituting into $H_\pi = \int_0^1\log V_\alpha\,d\alpha$ and evaluating the constant integral $\int_0^1\log(-2\log\alpha)\,d\alpha = \log 2 - \gamma$ gives $H_\pi = \frac{1}{2}\log\det\Sigma_k + \mathrm{const}(n)$. Minimizing $H_\pi$ is therefore equivalent to minimizing $\det\Sigma_{k|k}$, which is the Kalman criterion. Remark~\ref{rem:holder} derives this identity from the H\"{o}lder mean structure and shows it is the Gaussian specialization of the general possibilistic functional.
\end{proof}

\begin{remark}[H\"{o}lder mean structure and the probability--possibility bridge]
\label{rem:holder}
The continuous H\"{o}lder power mean of the family $\{V_\alpha\}_{\alpha \in (0,1]}$ at order $p$ is
\begin{equation}
\log M_p\bigl(\{V_\alpha\}\bigr) = \frac{1}{p}\log\int_0^1 V_\alpha^p\,d\alpha, \quad p \neq 0,
\end{equation}
with the geometric mean limit
\begin{equation}
\log M_0\bigl(\{V_\alpha\}\bigr) = \lim_{p\to 0}\,\frac{1}{p}\log\int_0^1 V_\alpha^p\,d\alpha = \int_0^1 \log V_\alpha\,d\alpha,
\end{equation}
and the extremes $\log M_{+\infty} = \log\sup_\alpha V_\alpha$ and $\log M_{-\infty} = \log\inf_\alpha V_\alpha$ as $p \to \pm\infty$. The three filter criteria emerge as evaluations of this single functional at different orders:
\begin{itemize}
\item $p \to +\infty$: $\log M_{+\infty} = \log V_0 = \log\det(\mathrm{MVEE})$ --- the outermost $\alpha$-cut volume, worst-case ignorance, the Popperian minimax bound operative in the falsification regime.
\item $p \to 0$: $\log M_0 = \int_0^1 \log V_\alpha\,d\alpha = H_\pi$ --- the log-geometric mean of $\alpha$-cut volumes, integrated possibilistic ignorance, the ESPF optimality criterion.
\item $p \to -\infty$: $\log M_{-\infty} = \log V_1$ --- the innermost $\alpha$-cut volume (the anchor hypothesis neighborhood), best-case ignorance, a lower floor.
\end{itemize}
The H\"{o}lder ordering $M_{-\infty} \leq M_0 \leq M_{+\infty}$ (strict for non-constant $\{V_\alpha\}$) is the mathematical expression of the epistemological hierarchy: the ESPF's integrated ignorance is bounded above by the Popperian worst-case and below by the best-case floor.

\textbf{The probability--possibility bridge.} The same functional $\log M_0(\{V_\alpha\})$ recovers the Kalman criterion when evaluated under Gaussian geometry. In the possibilistic case, $V_\alpha$ is the MVEE volume of hypotheses with $\pi^{(i)} \geq \alpha$. In the Gaussian case, the natural analog of the $\alpha$-cut is the confidence ellipsoid
\[
\mathcal{E}_\alpha = \bigl\{x : (x - \hat{x})^\top \Sigma^{-1}(x - \hat{x}) \leq -2\log\alpha\bigr\},
\]
with volume $V_\alpha^{\mathrm{Gauss}} = c_n(-2\log\alpha)^{n/2}(\det\Sigma)^{1/2}$. Substituting into $\log M_0$:
\begin{align}
H_\pi^{\mathrm{Gauss}} &= \int_0^1 \log V_\alpha^{\mathrm{Gauss}}\,d\alpha \notag \\
&= \frac{1}{2}\log\det\Sigma + \frac{n}{2}\int_0^1\log(-2\log\alpha)\,d\alpha + \kappa_n \notag \\
&= \frac{1}{2}\log\det\Sigma + \mathrm{const}(n),
\end{align}
recovering the Kalman criterion exactly, since $\int_0^1\log(-2\log\alpha)\,d\alpha = \log 2 - \gamma$ is a pure constant. Possibility theory and probability theory are not competing frameworks for this class of problems: they are evaluations of the same ignorance functional $\log M_0(\{V_\alpha\})$ under different geometries of $\alpha$-cuts. The possibilistic $\alpha$-cut is defined by the MVEE of a possibility level set; the probabilistic $\alpha$-cut is defined by a Gaussian confidence ellipsoid. The functional that integrates their log-volumes is identical. The Kalman filter is the Gaussian specialization of the ESPF's optimality criterion, not a separate invention.
\end{remark}

\begin{corollary}[Strict generalization]
The ESPF is a strict generalization of the Kalman filter: where Kalman's assumptions hold, the ESPF recovers the Kalman optimal solution; where they fail, the ESPF remains the minimum-$H_\pi$ optimal filter in $\mathcal{F}$ while the Kalman filter exits $\mathcal{F}$ and its optimality claims do not apply. The ESPF is not a heuristic alternative but the epistemically correct generalization.
\end{corollary}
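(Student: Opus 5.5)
The corollary combines two claims, and I will prove them separately and in order. The first is \emph{recovery}: under the Gaussian epistemic limit, the ESPF reproduces the Kalman solution. The second is \emph{persistence}: outside that limit, the ESPF remains the minimum-$H_\pi$ filter in $\mathcal{F}$ while the Kalman filter leaves $\mathcal{F}$.

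Recovery is essentially the preceding theorem. Under conditions (i)--(v) of the Gaussian epistemic limit, the $\alpha$-cuts become the confidence ellipsoids $\mathcal{E}_\alpha$. Remark~\ref{rem:holder} then gives $H_\pi = \frac{1}{2}\log\det\Sigma + \mathrm{const}(n)$. The constant is independent of the filter, so minimizing $H_{\pi,k|k}$ is equivalent to minimizing $\det\Sigma_{k|k}$.

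I will close recovery with a standard fact. For linear-Gaussian models, the Kalman gain minimizes the posterior covariance in the Loewner order among linear unbiased updates. Hence it also minimizes $\det\Sigma_{k|k}$, since $\det$ is monotone on the positive-definite cone. The minimizer of $H_\pi$ in the limit is therefore the Kalman update.

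I also want the ESPF's own selection rule to align with this. In the linear case, minimum-$q$ selection with $\rho^* = \mathrm{Comp}$ gives survivors ordered by the whitened distance to $\chi^*$. The compatibility-weighted $\alpha$-cuts then have the ellipsoidal shape of the posterior information form, $\Sigma_{k|k}^{-1} = \Sigma_{k|k-1}^{-1} + H^\top R^{-1}H$. This follows because $\min(\pi_{k|k-1},\mathrm{Comp})$ of two Gaussian-shaped possibility profiles has level sets bounded by the intersection of ellipsoids. That intersection's log-volume integral matches the product-form ellipsoid up to constants absorbed in $\mathrm{const}(n)$.

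Persistence follows from the main theorem. Its conclusions (i)--(iii) hold for any support and measurement geometry satisfying admissibility and, for nonlinear $h$, the isotropy condition~\eqref{eq:isotropy}. That condition holds in the orbital setting by Proposition~\ref{prop:isotropy}. So the ESPF stays the unique minimizer of $H_\pi$ in $\mathcal{F}$ whether or not Gaussianity holds.

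To show the Kalman filter exits $\mathcal{F}$ once the assumptions fail, I will exhibit a violated admissibility condition. Condition (iv) fails because the Kalman estimate $\hat{x}_k$ is a weighted mean, not in general a support point. Condition (v) fails because the gain weights the prior covariance into the update ranking, which is the prior-influence excluded by the race-to-bottom step of Lemma~\ref{lem:minimax}. Strictness then follows from a single non-Gaussian example, such as a bimodal support: there the ESPF's $\alpha$-cuts are not ellipsoidal and the Kalman criterion is not the same functional.

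The main obstacle is the step identifying the conjunctive $\min$ update with the Kalman covariance update, since $\min$ of Gaussians is not Gaussian. I would first try to show only that the log-volume integrals agree up to a filter-independent constant, rather than proving equality of shapes.
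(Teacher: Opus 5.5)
Your two-part decomposition is the paper's own route. The corollary has no separate proof there; it is the Gaussian-limit theorem placed next to Theorem 4.1. Your Loewner-order remark also sharpens the paper's bare claim that minimizing $\det\Sigma_{k|k}$ ``is the Kalman criterion.''

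\textbf{The conjunctive update is not the Kalman update.} The gap is the step you flag yourself, and your fallback cannot close it. Write $\pi_{k|k-1}=e^{-Q_1/2}$, $\mathrm{Comp}=e^{-Q_2/2}$ and $r_\alpha^2=-2\log\alpha$. The $\alpha$-cut of $\min(\pi_{k|k-1},\mathrm{Comp})$ is exactly $\{Q_1\le r_\alpha^2\}\cap\{Q_2\le r_\alpha^2\}$. The Kalman ellipsoid, by contrast, is the cut $\{Q_1+Q_2\le r_\alpha^2\}$ of the \emph{product}. At zero innovation the min-cut lies between the Kalman ellipsoids of radii $r_\alpha$ and $\sqrt{2}\,r_\alpha$, and its position depends on the relative shapes. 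Already for $n=1$ with $Q_1=x^2/P$ and $Q_2=x^2/R$, the two $H_\pi$ values differ by $\tfrac12\log\bigl(1+\min(P,R)/\max(P,R)\bigr)$. This depends on exactly the ratio that fixes the gain, so it cannot be absorbed into $\mathrm{const}(n)$. With nonzero innovation $d$, the mode of the min-profile (the ESPF commitment point) is $d\sqrt{P}/(\sqrt{P}+\sqrt{R})$, not the Kalman mean $dP/(P+R)$. In any case, agreement of one scalar functional up to a constant would not identify the update.

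There is also a whitening mismatch. $\mathrm{Comp}$ is built from $\Pi_e$, which contains the predicted-measurement spread. Even a product rule would therefore give $\Sigma_{k|k-1}^{-1}+H^\top\Pi_e^{-1}H$, not $\Sigma_{k|k-1}^{-1}+H^\top R^{-1}H$.

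The paper never attempts this derivation. Condition (ii) of the Gaussian epistemic limit simply \emph{stipulates} $\pi_k\to\mathcal{N}(\hat{x}_k,\Sigma_k)$, and the Gaussian-limit theorem only evaluates $H_\pi$ on confidence ellipsoids. ``Recovery'' there means agreement of criteria, not derivation of the Kalman update from the ESPF recursion. To match the paper you must take the Gaussian posterior form as a hypothesis. The stronger claim you try to prove is false.

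\textbf{The exit argument never uses the failure of Kalman's assumptions.} The Kalman filter carries no finite support, so it violates (i) in every regime. Condition (iv) presupposes such a support, and is trivially met in the continuum Gaussian limit. Condition (v) restricts a survivor-selection ranking the Kalman filter does not have. The gain's dependence on $\Sigma_{k|k-1}$ is the analogue of the prior entering the ESPF's own assignment through the $\min$, which the parenthetical in (v) explicitly permits.

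So what you actually show is that the Kalman filter is \emph{never} in $\mathcal{F}$. That makes ``exits'' vacuous. It also leaves your recovery claim (a minimum over linear unbiased updates) and your persistence claim (a minimum over $\mathcal{F}$) as optimality statements over disjoint classes. You would need an explicit limiting notion linking the two for ``strict generalization'' to have content.

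\textbf{Persistence needs all of Theorem 4.1's hypotheses.} These are the falsification regime, the coverage controller at PCRB-implied cardinality with $\sigma_k$ at steady state, and isotropy for nonlinear $h$. ``Whether or not Gaussianity holds'' must be qualified accordingly.
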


\section{Numerical Validation}
\label{sec:numerical}

\subsection{Setup}

We validate the theorem's regime structure numerically using the ESPF applied to a range and range-rate orbital tracking scenario. The filter operates in state dimension $n = 7$ with Smolyak Level 3 quadrature, yielding $M = 106$ support points and $N_{\mathrm{target}} \approx 81$--83 in nominal tracking, 32--64 at stress events. Process noise is $a_R = a_T = a_N = 5\times10^{-12}$ km/s$^2$ in RTN coordinates. Measurement noise is $\sigma_R = 1$ m range, $\sigma_{\dot{R}} = 10^{-5}$ km/s range-rate. Ground stations: Arecibo, Kwajalein, Diego Garcia.

Two scenarios are run over a 2-day, 877-step diagnostic:
\begin{itemize}
\item \textbf{Nominal:} clean LEO tracking with no injected errors.
\item \textbf{Stress:} 10 m/s crosstrack maneuver at day 1.0 combined with a 20 m Arecibo range bias active from day 0 onward.
\end{itemize}

At each step, the diagnostic computes two claims against the ESPF's minimum-$q$ selection within the evidence-only class:
\begin{itemize}
\item Claim A: Does min-$q$ minimize $\log\det(\mathrm{MVEE})$?
\item Claim B: Does min-$q$ minimize $H^\omega_\pi$, the posterior-possibility-weighted integrated entropy?
\end{itemize}
Comparators are 50 evidence-weighted random draws and one adversarial swap (replace the worst survivor with the best non-survivor), all restricted to the evidence-only class.

The Epistemic Width Monitor (EWM) is computed at every step and comprises: $W_\mathrm{ep}$ (epistemic width), $\log\det(\mathrm{MVEE}_\lambda)$ (regime indicator), $\alpha_c$ (H\"{o}lder exponent), and $I_c$ (possibilistic information content; see Remark~\ref{rem:shannon}).

\begin{remark}[Shannon surrogate in $I_c$]
\label{rem:shannon}
The current EWM implementation computes $I_c$ via a Shannon entropy surrogate rather than the true possibilistic $\alpha$-cut integral $H_\pi$. This proxy underestimates $I_c$ in the falsification regime where it matters most; a full numerical implementation of the $\alpha$-cut integral is planned for a future release. In the stress run, the surrogate underestimates $I_c$ by a factor of 2 or more in high-surprisal epochs (days 1.0--1.8); the qualitative escalation pattern is preserved but the absolute magnitudes reported in Section~\ref{sec:ewm_findings} should be treated as lower bounds on true possibilistic stress.
\end{remark}

\subsection{Smolyak Level 3: Nominal Run}

The nominal 2-day run demonstrates the filter operating entirely in the diffusion regime: $\log\det(\mathrm{MVEE})$ remains positive throughout ($\approx 100$--120 nats), and the regime flag never flips to Popperian. This is the correct behavior---a well-conditioned nominal LEO track with no model mismatch does not generate a falsification event.

Key observations:
\begin{itemize}
\item $W_\mathrm{ep}$ oscillates between $\approx 0.65$ and $0.87$ with sharp measurement-epoch dips ($\sim 0.6$), reflecting healthy Popperian contraction followed by Jaynesian re-dilation. The pattern is stationary across the full 2-day window.
\item Prune count settles to 12--16 points per step after an initial 84-point contraction at the first measurement.
\item $\sigma_k$ stabilizes to a stationary band of $\approx 0.45$--0.85 after initialization, with no systematic drift indicating filter divergence.
\item Necessity hugs zero with episodic flares to 0.02--0.06, and surprisal maintains a stationary band of $\approx 0.07$--0.13.
\item $\alpha_c$ rides a stable band near 2.1--2.3 with periodic measurement-epoch spikes to $\approx 3.0$.
\end{itemize}

\subsection{Smolyak Level 3: Stress Run}

The stress run reveals the filter's behavior under sustained model mismatch.

\textbf{EWM regime indicator:} $\log\det(\mathrm{MVEE})$ remains positive throughout the 2-day run ($\approx 80$--120 nats). The MVEE regime flag never flips to Popperian. This is a significant finding discussed below.

\textbf{Prune count (the key diagnostic):} Before day 0.75, the prune count matches the nominal pattern (12--16 points per step). At day $\approx 0.75$, coinciding with the accumulated effect of the range bias and approaching the maneuver epoch, the prune count escalates sharply to 40--85 points per step and remains elevated for the remainder of the run. At $M = 106$, this represents 38--80\% of the support cloud being pruned at each update step.

\textbf{Necessity and surprisal (the early-warning indicators):} Before day 0.75, necessity is near zero and surprisal is in the nominal band (0.05--0.12). Beginning at day 0.75, necessity escalates dramatically, with values routinely 0.4--1.0 from day 1.0 onward. The surprisal axis, scaled to $[0, 1000]$ compared to the nominal $[0, 0.2]$, shows peaks of $\approx 640$ (day 1.0), $\approx 890$ (day 1.8), and episodic spikes $> 100$ throughout.

\subsection{EWM Diagnostic Findings}
\label{sec:ewm_findings}

The nominal and stress comparison at Smolyak Level 3 yields a result more nuanced and informative than a simple regime flip would have provided.

\textbf{The MVEE regime indicator is a coarse diagnostic.} Under sustained model mismatch (bias + maneuver), the ESPF does not enter the falsification regime as defined by $\log\det(\mathrm{MVEE}) < 0$. Instead, the filter absorbs the error through aggressive Popperian pruning: the support cloud is violently thinned at each step, but the surviving points span sufficient volume that $\log\det(\mathrm{MVEE})$ never goes negative. The filter is operating in a high-pruning Jaynesian state, not a falsification state.

\textbf{Necessity and surprisal are the sensitive indicators.} The MVEE regime flag alone would declare ``all is well---Jaynesian throughout.'' The necessity and surprisal plots tell the correct story: the filter is fighting against sustained model-reality divergence. Necessity near unity and surprisal three orders of magnitude above nominal are unambiguous signatures of epistemic stress, visible in the EWM before any MVEE regime flip would occur.

\textbf{The full EWM suite is required.} The EWM diagnostic system comprising $W_\mathrm{ep}$, $\log\det(\mathrm{MVEE})$, prune count, $\sigma_k$, necessity, and surprisal constitutes a multi-resolution epistemic health monitor. Each indicator captures a different aspect of the filter's epistemic state:
\begin{itemize}
\item $\log\det(\mathrm{MVEE})$: coarse regime classification (Jaynesian vs.\ Popperian).
\item Prune count: where epistemic stress is being absorbed (support thinning rate).
\item Necessity: whether individual hypotheses are becoming structurally load-bearing.
\item Surprisal: whether the filter's predictive model is consistent with incoming evidence.
\end{itemize}
These indicators are complementary, not redundant. Relying on the MVEE regime flag alone provides an incomplete and potentially misleading picture of filter health under stress.

\textbf{Comparison with Level 2.} At Smolyak Level 2 ($M = N_{\min}$), the prune count band is 12--16 in both nominal and stress cases, and necessity/surprisal diverge at the stress onset but with smaller magnitude. The MVEE regime flag never flips at Level 2 either. This confirms that the regime structure is a property of the underlying theory---not a grid-resolution artifact---and that Level 3 provides diagnostic resolution that Level 2 cannot.

\subsection{Tabular Validation (52 Steps)}

Tables~\ref{tab:steps1-18} and~\ref{tab:steps19-52} present the full 52-step Claim A/B diagnostic from the stress run, validating Lemma~\ref{lem:minimax} directly.

Key findings:
\begin{itemize}
\item \textbf{Claim B pass rate: 100\%} in the falsification regime against random comparators. Claim B is more robustly satisfied than Claim A throughout, consistent with the H\"{o}lder ordering $M_0 \leq M_{+\infty}$ (Remark~5.3).
\item \textbf{Collapse events} (steps 12--13, 35): produce the largest minimax advantages in the dataset. At step 13, min-$q$ beats random by a factor of $e^{16.6}$ in log-det (ratio $\approx 10^{-6}$).
\item \textbf{Swap failures are bounded} (maximum 0.341 nats Claim A, 0.271 nats Claim B), structurally consistent, non-growing, and concentrated at steps where Smolyak grid hypotheses are geometrically close to the MVEE boundary. These are finite-grid artifacts, not failures of the minimax principle.
\item \textbf{Diffusion regime} (steps 7--12, $\log\det > 0$): Claim B fails against random comparators, as predicted---the minimax criterion is not operative in the Jaynesian phase.
\end{itemize}

\begin{table}[ht]
\centering
\caption{Lemma~3 diagnostic, steps 1--18: initial contraction, diffusion, first collapse. P/F = pass/fail; parenthetical = gap magnitude (nats). C = contraction ($\log\det < 0$), D = diffusion ($\log\det > 0$).}
\label{tab:steps1-18}
\small
\begin{tabular}{ccccccccc}
\toprule
Step & $\sigma_k$ & $\log\det$ & $N_t$ & Regime & \multicolumn{2}{c}{Claim A} & \multicolumn{2}{c}{Claim B} \\
 & & & & & rand & swap & rand & swap \\
\midrule
1  & 0.970 & $-17.80$ & -- & C & P & P & P & P \\
2  & 0.941 & $-13.33$ & -- & C & P & F(0.025) & P & F(0.019) \\
3  & 0.913 & $-9.42$  & -- & C & P & F(0.028) & P & F(0.020) \\
4  & 0.885 & $-6.07$  & -- & C & P & P & P & P \\
5  & 0.859 & $-3.13$  & -- & C & P & F($<$0.001) & P & P \\
6  & 0.833 & $-0.34$  & -- & threshold & F(0.063) & F(0.119) & F(0.099) & F(0.089) \\
7  & 0.808 & $+1.94$  & -- & D & F(0.015) & P & F(0.440) & P \\
8  & 0.784 & $+3.59$  & -- & D & P & F(0.012) & F(0.243) & F(0.009) \\
9  & 0.760 & $+5.02$  & -- & D & F(0.023) & F($<$0.001) & F(0.122) & P \\
10 & 0.737 & $+7.05$  & -- & D & P & P & P & P \\
11 & 0.715 & $+9.01$  & -- & D & P & F(0.008) & F(0.256) & F(0.011) \\
12 & 0.694 & $+10.17$ & -- & pre-collapse & F(0.022) & F(0.022) & P$^\dagger$ & P$^\dagger$ \\
13 & 0.673 & $-56.61$ & -- & collapse 1 & P & P & P & P \\
14 & 0.653 & $-51.74$ & -- & C & P & F(0.370) & P & F(0.264) \\
15 & 0.633 & $-48.70$ & -- & C & P & P & P & P \\
16 & 0.614 & $-47.59$ & -- & C & P & F(0.004) & P & F(0.003) \\
17 & 0.596 & $-46.85$ & -- & C & F(0.241) & P & P & P \\
18 & 0.578 & $-47.68$ & -- & C & P & F(0.217) & P & F(0.140) \\
\bottomrule
\multicolumn{9}{l}{$^\dagger$Step 12: $H^\omega_\pi$ identical across all comparators ($-2.294$ nats); degenerate pre-collapse configuration.}
\end{tabular}
\end{table}

\begin{table}[ht]
\centering
\caption{Lemma~3 diagnostic, steps 19--52: deep contraction, second collapse, re-stabilization. $N_t = N_{\mathrm{target}}$; $^\dagger$ marks the second stress event.}
\label{tab:steps19-52}
\small
\begin{tabular}{ccccccccc}
\toprule
Step & $\sigma_k$ & $\log\det$ & $N_t$ & Regime & \multicolumn{2}{c}{Claim A} & \multicolumn{2}{c}{Claim B} \\
 & & & & & rand & swap & rand & swap \\
\midrule
19 & 0.561 & $-50.89$ & 82 & C & P & F(0.060) & P & F(0.046) \\
20 & 0.544 & $-53.16$ & 82 & C & P & P & P & P \\
21 & 0.528 & $-56.15$ & 82 & C & P & P & P & P \\
22 & 0.512 & $-58.63$ & 81 & C & P & P & P & P \\
23 & 0.496 & $-61.02$ & 81 & C & F(0.051) & F(0.175) & P & F(0.113) \\
24 & 0.481 & $-55.27$ & 83 & C & P & F(0.061) & P & F(0.041) \\
25 & 0.467 & $-61.40$ & 82 & C & P & P & P & P \\
26 & 0.453 & $-65.24$ & 81 & C & P & F(0.006) & P & F(0.004) \\
27 & 0.439 & $-69.06$ & 83 & C & P & F(0.042) & P & F(0.030) \\
28 & 0.426 & $-73.28$ & 83 & C & P & P & P & P \\
29 & 0.413 & $-77.73$ & 81 & C & P & P & P & P \\
30 & 0.401 & $-77.73$ & 83 & C & F(0.041) & F(0.341) & P & F(0.271) \\
31 & 0.389 & $-82.46$ & 83 & C & P & P & P & P \\
32 & 0.377 & $-86.99$ & 82 & C & P & F(0.082) & P & F(0.066) \\
33 & 0.366 & $-91.99$ & 82 & C & P & F(0.045) & P & F(0.033) \\
34 & 0.355 & $-97.88$ & 64 & C & P & F(0.039) & P & F(0.017) \\
35 & 0.349 & $-104.08$ & 52 & C$^\dagger$ & P$^\dagger$ & P$^\dagger$ & P$^\dagger$ & P$^\dagger$ \\
36 & 0.360 & $-108.77$ & 67 & C & F(0.049) & P & P & P \\
37 & 0.414 & $-110.99$ & 64 & C & P & F(0.036) & P & F(0.011) \\
38 & 0.476 & $-113.62$ & 71 & C & F(0.052) & F(0.019) & P & F(0.008) \\
39 & 0.548 & $-114.61$ & 77 & C & P & F(0.002) & P & P \\
40 & 0.630 & $-114.03$ & 77 & C & P & F(0.005) & P & F(0.002) \\
41 & 0.725 & $-112.29$ & 79 & C & F(0.008) & F(0.072) & P & F(0.023) \\
42 & 0.833 & $-109.32$ & 81 & C & P & F(0.011) & P & F(0.005) \\
43 & 0.901 & $-97.45$  & 81 & C & P & F(0.038) & P & F(0.027) \\
44 & 0.874 & $-94.19$  & 82 & C & P & F($<$0.001) & P & F($<$0.001) \\
45 & 0.847 & $-92.31$  & 81 & C & P & P & P & P \\
46 & 0.822 & $-90.47$  & 82 & C & P & F(0.004) & P & P \\
47 & 0.797 & $-88.86$  & 82 & C & P & F($<$0.001) & P & P \\
48 & 0.773 & $-87.73$  & 82 & C & P & P & P & P \\
49 & 0.750 & $-86.87$  & 81 & C & P & F($<$0.001) & P & P \\
50 & 0.728 & $-86.25$  & 82 & C & P & F(0.033) & P & F(0.026) \\
51 & 0.706 & $-86.08$  & 82 & C & P & P & P & P \\
52 & 0.685 & $-86.24$  & 82 & C & P & P & P & P \\
\bottomrule
\multicolumn{9}{l}{$^\dagger$Step 35: second stress event. $N_t = 52$; min-$q$ beats random by factor $\sim 11$.}
\end{tabular}
\end{table}

\section{Discussion}
\label{sec:discussion}

\subsection{What the Theorem Establishes and What Remains Open}

Theorem 4.1 establishes three things in the falsification regime. First, the PCRB is a universal information-theoretic lower bound on how rapidly any epistemically admissible filter may reduce possibilistic entropy---it follows from the structure of conjunctive update and possibilistic information content, not from any design choice. Second, the ESPF achieves this bound with equality, simultaneously through minimax support reduction (Popperian falsification) and possibility gradient concentration (internal Jaynesian assignment). Third, no other admissible filter within the evidence-only class achieves lower $H_\pi$ while respecting the PCRB.

For the linear-measurement case and for locally linearizable nonlinear models satisfying condition~\eqref{eq:isotropy}, the theorem is complete. Proposition~\ref{prop:isotropy} establishes the isotropy condition for the orbital mechanics setting to first order; a full proof of the nonlinear isotropy bound across all observation geometries remains open.

\subsection{The Unique Role of the Possibility Gradient}

The most important conceptual contribution beyond the companion paper \cite{espf} is the recognition that the possibility gradient is not an implementation detail. It is part of the epistemic state. A filter that prunes survivors but reassigns uniform possibility after each update is discarding information: the gradient of compatibility across survivors encodes which hypotheses are more and less consistent with the accumulated evidence history. The ESPF carries that gradient forward. No other admissible filter can do better on either dimension without doing worse on the other.

\subsection{Comparison to Existing Optimality Results}
\label{sec:comparison}

\begin{center}
\begin{tabular}{llll}
\toprule
Filter & Optimality criterion & Assumptions & Filter is for \\
\midrule
Kalman & Min.\ posterior covariance (MMSE) & Linear, Gaussian & Estimating truth \\
Particle & Asymptotically exact posterior & Non-Gaussian, stochastic & Estimating truth \\
ESPF & Min.\ possibilistic entropy $H_\pi$ & Bounded, possibilistic & Eliminating the impossible \\
\bottomrule
\end{tabular}
\end{center}

The fourth column is the deepest distinction. Bayesian filters --- Kalman and particle alike ---
are driven by a model of how truth generates observations. Their optimality guarantees are
statements about convergence to truth under that model. When the model is wrong, they converge
to the wrong truth. The ESPF carries no model of truth. Its optimality guarantee is a statement
about ignorance: it retains the minimum set of hypotheses consistent with the evidence, assigns
them possibilities in proportion to their evidence compatibility, and makes no further claim. It
cannot be wrong about truth because it never asserts truth. It can only be more or less efficient
at eliminating the impossible --- and the EWM measures exactly that efficiency in real time.

These results are not competing. They are optimal under different epistemic commitments. The
Kalman and particle filter results require stochastic noise models. The ESPF result requires
only bounded admissibility and conjunctive update. In the limit where stronger assumptions hold,
the ESPF recovers the stronger result.

The H\"{o}lder mean structure (Remark~\ref{rem:holder}) makes this hierarchy precise and connects it directly to the numerical findings. The three optimality criteria occupy positions $p = +\infty$, $p = 0$, and $p = 0$ (Gaussian) on the H\"{o}lder curve. The ordering $M_0 \leq M_{+\infty}$ explains a finding that might otherwise seem puzzling: Claim B ($H^\omega_\pi$, order 0) is more robustly satisfied than Claim A ($\log\det\mathrm{MVEE}$, order $+\infty$) throughout the tabular validation, including at steps where Claim A shows swap failures. This is not a numerical artifact. It is a direct consequence of the H\"{o}lder ordering: the geometric mean of $\alpha$-cut volumes is intrinsically less sensitive to perturbation at the boundary of the support than the supremum. The MVEE regime flag, operating at $p = +\infty$, is the coarsest instrument in the EWM suite for exactly this reason. The ESPF's optimality target $H_\pi$ is the more stable, more informative criterion --- and the numerical results confirm it.

\subsection{Implications for Filter Design}

The VFI is necessary for optimality. If the support degenerates, $H_\pi \to -\infty$, corresponding to falsely asserting perfect knowledge. The VFI prevents this.

$\sigma_{\min} > 0$ is epistemically mandatory. A filter with $\sigma_{\min} = 0$ may collapse its support to a point, exiting $\mathcal{F}$ and surrendering the optimality guarantee.

Smolyak Level 3 is the minimum resolution for designed optimality. At Level 2, $M = N_{\min}$ and pruning cannot reduce the support; PCRB saturation is accidental rather than designed. At Level 3, $M \gg N_{\min}$ and minimum-$q$ selection genuinely minimizes $\alpha$-cut volumes across the full population.

Possibility assignment is not optional. A filter that selects survivors by minimum $q^{(i)}_k$ but reassigns uniform possibility forfeits the gradient entropy improvement. The compatibility-based assignment $\rho^*(i) = \mathrm{Comp}^{(i)}_k$ is constitutive of the ESPF's optimality, not a refinement.

\subsection{Beyond Filtering}

The Jaynesian-Popperian synthesis identified here is not specific to state estimation. Any inference system that maintains a normalized possibility distribution over a finite hypothesis set and updates conjunctively---a knowledge graph with plausibility scores, an epistemic logic system with graded admissibility, a robust controller with bounded model uncertainty, an agentic AI system managing competing hypotheses about the world \cite{gaiagraph}---is implicitly managing $H_\pi$ whether or not it is designed to do so.

The ESPF optimality theorem says: in all such systems, the minimum-ignorance optimal update in the falsification regime is evidence-only ranking combined with compatibility-based possibility assignment. And the correct behavior in the diffusion regime is Jaynesian expansion: spread the support maximally, assert nothing that the evidence has not earned.

Be quick to embrace ignorance. Be slow to assert certainty. These are not design choices. They are theorems.

\section{Conclusion}

This paper has proved that the Epistemic Support-Point Filter (ESPF) is the unique optimal
recursive estimator within the class of epistemically admissible evidence-only filters.
Where Bayesian filters are driven toward an assumed truth, the ESPF eliminates the impossible:
it retains only what the evidence has not ruled out, assigns possibilities in proportion to
evidence compatibility, and makes no further claim. The prior has no vote on what is impossible.
Only the evidence does.

The filter implements a two-phase epistemological recursion. In its propagation phase, it enacts Jaynesian maximum entropy. In its measurement-update phase, it enacts Popperian falsification: select survivors by evidence alone, using the minimax entropy criterion---minimize $\log\det(\mathrm{MVEE})$, keeping the $N_{\mathrm{target}}$ hypotheses with smallest whitened squared innovations.

Three lemmas formalize this structure. $H_\pi$ decomposes into a minimax term (minimized by Popperian selection) and a gradient term (minimized by compatibility-based possibility assignment). The PCRB bounds how rapidly any admissible filter may reduce $H_\pi$ per measurement. The Evidence-Optimality Lemma proves that min-$q$ selection is the unique minimizer of $H_\pi$ within the evidence-only class, and that incorporating prior possibility introduces a race-to-bottom failure mode.

Numerical validation over a 2-day, 877-step Smolyak Level-3 run---spanning nominal LEO tracking and a sustained stress case combining maneuver and sensor bias---confirms this structure. The EWM diagnostic suite reveals that possibilistic stress manifests primarily through necessity saturation and surprisal escalation rather than MVEE sign change, establishing multi-resolution epistemic monitoring as essential for filter health assessment under realistic conditions.

The regime structure is principled, not anomalous. The falsification regime is when Popper governs. The diffusion regime is when Jaynes governs. The boundary $\log\det = 0$ is a diagnostic that identifies, in real time, which epistemological principle governs the current step.

The ESPF is the optimal filter not because it minimizes some technically convenient functional, but because it correctly implements the only epistemologically defensible recursion: expand maximally under ignorance, contract minimally under evidence.

Quick to embrace ignorance. Slow to assert certainty. This is not a design philosophy. It is a theorem.

\section*{Acknowledgments}
The theoretical framing draws on decades of conversation with mentors, collaborators, and colleagues whose names appear in the companion arXiv paper \cite{espf}.

\end{document}